\newcommand{\Z}{{\mathbb Z}}
\newcommand{\fd}{\mathscr{F}in\mathscr{D}ist}
\newcommand{\cC}{\mathcal{C}}
\newcommand{\cE}{\mathcal{E}}
\newcommand{\cH}{\mathcal{H}}
\newcommand{\cL}{\mathcal{L}}
\newcommand{\cM}{\mathcal{M}}
\newcommand{\sH}{\mathscr{H}}
\newcommand{\sS}{\mathscr{S}}
\newcommand{\sC}{\mathscr{C}}
\newcommand{\bridge}{\mathbf{B}}
\newcommand{\SK}{\mathcal{SK}}
\newcommand{\PK}{\mathcal{PK}}
\newcommand{\negl}{\mathrm{negl}}
\DeclareMathOperator{\Enc}{\mathrm{Enc}}
\DeclareMathOperator{\Dec}{\mathrm{Dec}}
\DeclareMathOperator{\KeyGen}{\mathrm{KeyGen}}
\DeclareMathOperator{\Eval}{\mathrm{Eval}}
\begin{document}

\title{Composing Bridges}
%
%
\author{Mugurel Barcau\inst{1,2} \and
Vicen\c tiu Pa\c sol \inst{1,2} \and George C. \c Turca\c s \inst{1,3}}
\authorrunning{M. Barcau, V. Pa\c sol and G. C. \c Turca\c s}
%
\institute{certSIGN -- Research and Innovation, Bucharest, Romania 
\and
Institute of Mathematics ``Simion Stoilow" of the Romanian Academy
 \and Babe\c s-Bolyai University, Cluj-Napoca, Romania
\email{\{alexandru.barcau,vicentiu.pasol,george.turcas\}@certsign.ro}}

\maketitle              
\begin{abstract} 

The present work builds on previous investigations of the authors (and their collaborators) regarding bridges, a certain type of morphisms between encryption schemes, making a step forward in developing a (category theory) language for studying relations between encryption schemes. Here we analyse the conditions under which bridges can be performed sequentially, formalizing the notion of composability. One of our results gives a sufficient condition for a pair of bridges to be composable. We illustrate that composing two bridges, each independently satisfying a previously established IND-CPA security definition, can actually lead to an insecure bridge. Our main result gives a sufficient condition that a pair of secure composable bridges should satisfy in order for their composition to be a secure bridge. We also introduce the concept of a \textit{complete} bridge and show that it is connected to the notion of Fully composable Homomorphic Encryption (FcHE), recently considered by Micciancio. Moreover, we show that a result of Micciancio which gives a construction of FcHE schemes can be phrased in the language of complete bridges, where his insights  can be formalised in a greater generality.

\keywords{Bridge \and Composability \and Fully Composable Homomorphic Encryption \and IND-CPA security}
\end{abstract}

\section{Introduction}

\noindent\textbf{1.1} When designing a complex cryptographic solution, one has to combine multiple cryptographic protocols and the interaction between these protocols is playing an important role in the security of the global solution. For example, in such a solution one might have ciphertexts encrypted under different secret keys, or even encrypted using different encryption schemes. This gives rise to the necessity of switching ciphertexts encrypted using one secret key to ciphertexts encrypted under a different key. A solution can be found in the literature under the name of Proxy Re-Encryption (see \cite{Do03} and the references within). A similar idea emerges in Hybrid Homomorphic Encryption (see for example \cite{Rasta18} and \cite{Pasta21}). Such a protocol is used mainly to reduce bandwidth costs resulting from ciphertext expansion for the homomorphic encryption schemes. When Hybrid Homomorphic Encryption is deployed, a server is able to convert ciphertexts encrypted using a symmetric cipher to ciphertexts encrypted using a homomorphic encryption scheme.  In our previous work \cite{BLPT23}, we defined and gave examples of \textit{bridges}, formalizing the conditions under which an algorithm that publicly transforms encrypted data from one scheme to another should perform. In the same work, we defined the IND-CPA security of a bridge by relating it to the security of a specific encryption scheme associated to the bridge. 

\medskip

\noindent\textbf{1.2} Here we take a step forward and study the context of interactions between bridges and deal with their security from a global point of view in relation with their individual security considerations. For a fair general treatment of this situation, we point out to the work of Canetti and his collaborators (\cite{Can02},\cite{Can07},\cite{Can20}).

 \noindent In this work we are concerned with the sequential evaluation, i.e. composition, of multiple bridges. Ideally, after composing two bridges, the resulting ciphertext should preserve the underlying plaintext after decryption. The notion of composability we use and define here for bridges is finer than the mere instantiation to our case of the universal composition (UC) notion of Canetti. The universality UC theorem does not apply in our setting, and neither do the modularity results that can be deduced from the aforementioned theorem. In our Example \ref{ex:halfsk}, both components (protocols) are secure, while their composition fails this requirement. Moreover, the correctness property (which is seen in the ideal-process framework as a security feature) also fails for the generic composition of bridges. 
However, we remark that our main theorems which give sufficient conditions for the composition of bridges to be correct and secure resemble the ideal-process structural shape in the work of Canetti \cite{Can02}, namely, the indistinguishability from a process that can be (publicly) described.

\medskip
\noindent\textbf{1.3} Let us recall some practical and theoretical applications of bridges. Proxy Re-Encryption can be used in protocols for secure distribution of files \cite{AFGH06}, e-mail forwarding and secure payments. As pointed out in \cite[page 2]{PWANB16}, a procedure called ``key rotation", based on Proxy Re-Encryption is required by Payment Card Industry Data Security Standard (PCI DSS) and by the Open Web Application Security Project (OWASP).

In practice, by construction, all Proxy Re-Encryption (PRE) protocols can convert ciphertexts between the same scheme. Recently, the theoretical concept of Universal Proxy Re-Encryption (UPRE) \cite{DN21} was proposed in order to extend the concept of PRE by accommodating scenarios in which a delegate can convert ciphertexts from a PKE scheme into ciphertexts of a  potentially different PKE scheme. Unfortunately, UPREs are very difficult to realize in practice.

The overwhelming majority of the examples discussed here involve Fully Homomorphic Encryption (FHE) schemes. Within this setting, \textit{bootstrapping} procedures (such as the recent bootstrapping in FHEW-like cryptosystems \cite{MP21,Lee22}) are of great importance. As we pointed out in \cite{BLPT23}, these give rise to bridges. The theoretical language of bridges developed in the present work offers a conceptual framework for explaining the following previously known fact about certain homomorphic encryption schemes: 
if one performs a  circuit on encrypted data (which is a bridge) and then a bootstrapping procedure (which is also a bridge), then the correctness of decryption is preserved (hence one obtains a third bridge).

In the previous situation, the IND-CPA security analysis of the third bridge reduces to the \textit{circular security} of the homomorphic encryption scheme.
\medskip

\noindent\textbf{1.4} The present article is a natural continuation of the work in \cite{BLPT23}, where the underlying motivation was to create a category whose objects are encryption schemes, aiming to understand relations among them. Considering bridges as viable candidates for the arrows in such a category, one is naturally lead to the notion of composing bridges. As we mention in Remark \ref{rem: categ} bellow, one can perform composition of bridges when restricting to bridges that are \textit{complete}. Unfortunately, composition is not compatible with yet another feature of encryption schemes, namely their semantic security. Thus, we are forced, for the moment, to  further restrict the morphisms in our envisaged category to \emph{Gentry-type} bridges as Theorem \ref{thm:composition} asserts. Sadly, this constraint on morphisms forces a restriction on the objects  (i.e. encryption schemes), namely in this setting one must restrict to the study of relations between FHE schemes. Moreover, if we want to allow ``self-morphisms" (there is a nuance that differentiates them from endomorphisms), i.e. bridges between the same scheme which have the \emph{same} secret key, then we need to limit our attention to those FHE schemes which are circular secure (see Theorem \ref{thm:Micciancio}).

\medskip

\noindent\textbf{1.5} As we shall see throughout this paper, multiple technical difficulties arise when one sequentially performs bridges. 

Firstly, as we point out shortly after Definition \ref{def:compbridges}, due to the possible loss of correctness of decryption, the composition of two bridges is not necessarily a bridge. We overcome this by introducing the notion of \textit{complete bridges} (see Definition \ref{def:bridge_comp}), a type of bridge which satisfies an enhanced correctness property with respect to the decryption algorithms. We then show (in Proposition \ref{PropComp}) that complete bridges can be composed.

Secondly, in Section \ref{sec: seccomp} we observe that, even if one is able to compose two \textit{secure bridges}, the resulting bridge might not be secure (see Example \ref{ex:halfsk}). Theorem \ref{thm:main} gives a sufficient condition on a pair of composable secure bridges for their composition to be secure.

\medskip

\noindent\textbf{1.6} We apply our theoretical results to certain types of bridges that arise in the context of homomorphic encryption schemes. In such schemes, some boolean circuits can be evaluated on encrypted data and such an evaluation is called homomorphic. We show that the homomorphic evaluation of a circuit gives rise to a bridge between a \emph{fiber power} of the encryption scheme to the scheme itself (see Section \ref{sec:BrigCirc}).

In a recent talk, Micciancio \cite{Mic22} points out that the definition of fully homomorphic encryption does not guarantee that one can sequentially homomorphically evaluate two circuits while preserving correct decryption.  In this paper, we formulate the previous aspect in the language of bridges. The mere definition of a correct bridge does not guarantee that one can further apply another bridge to the resulting ciphertext of the first bridge without the risk of loosing correctness. Motivated by the definitional issue raised above, Micciancio introduces the notion of fully composable homomorphic encryption (FcHE) and inspired by the bootstrapping procedure (used to construct FHE schemes) he sketches \cite{Mic22} the proof of a theorem asserting sufficient conditions for the existence of a FcHE scheme. The security of the FcHE scheme constructed by Micciancio follows from the circular security of a FHE.  To address an analogous issue, building up on previous work in \cite{BLPT23}, we introduce the definition of a \textit{complete} bridge.  Finally, we show that the aforementioned result of Micciancio \cite{Mic22} can be phrased in the language of bridges, where its proof becomes more conceptual. In our proof, we constructed a special type of bridge inspired from the \textit{Gentry type} bridges defined in \cite{BLPT23}. The difference between the former and later bridges is subtle and resides in their KeyGen algorithms. The KeyGen algorithm for the bridge needed in this proof satisfies an additional condition which accounts for the fact that the security of this bridge is equivalent to the circular security of a homomorphic encryption scheme.

\medskip

\noindent\textbf{1.7}
The paper is organized as follows. In the next section we give some preliminaries used throughout the paper. In Section \ref{sec:defex} we recall the main concepts needed in this article. We define complete and composable bridges in Section \ref{sec:CompComp} and prove that complete bridges are composable. In Section \ref{sec: seccomp} we investigate the security of composition of bridges and prove our main result. The next section is dedicated to Gentry-type bridges and compositions of general bridges with such bridges, proving the correctness and security of these protocols.
In the last section, we  explicitly construct bridges from circuits and show how one can translate the original discussion of Micciancio about composition of homomorphic evaluation of circuits into our language of bridge composition. 

\medskip

\noindent\textbf{1.8}
In his talk, Micciancio explained that one of the motivations of his analysis was the connection between circular security and fully composable homomorphic encryption. The biggest hope is to place circular security into a larger setting, which will then allow to possibly prove that circular security reduces to standard assumptions. We are informally asking if the language of bridges and bridge composition adds valuable insights into this problem. In addition, the theory of composable bridges can be related to the way the third generation of FHE schemes realise the bootstrapping procedure. Indeed, the accumulator proposed in \cite{DM15} can be viewed as a composition of bridges, one of which is a Gentry type bridge. It would be interesting to further investigate the connections between this theory and the theory of fully homomorphic encryption schemes constructed on the LWE assumption.

\subsection*{Acknowledgements}

The authors are indebted to George Gugulea, Cristian Lupa\c scu and Mihai Togan for helpful discussions and comments during the preparation of this work.

\section{Preliminaries}

In all our definitions, we denote the security parameter by $\lambda$. We say that a function $\negl: \mathbb{N} \rightarrow [0, + \infty)$
is a negligible function if for any positive integer $c$ there exists 
a positive integer $N_c$, such that $\negl (n) < \dfrac{1}{n^c}$ for all $n \geq N_c$.

Throughout this article, we will use the language of finite distributions (see section 2.1 of \cite{BLPT23}). In particular, an encryption scheme comes with the following finite distributions: secret keys, public keys and the encryptions of every fixed message. To be precise, all of the distributions mentioned here are defined for each fixed value of the security parameter $\lambda$. For example, the secret key distributions $\{ \SK_{\lambda}: \lambda \in \mathbb N \}$ form an ensemble of finite distributions. Such ensembles form a category, in which one can define finite products and study relations between its objects. To fix ideas, we choose to recall the following illustrative example of a relation between two finite distributions. Given $\lambda$, the key generation algorithm of an encryption scheme gives rise to the distribution of secret keys $\mathcal{SK}_{\lambda}$ and, for each sample $sk \leftarrow \SK_{\lambda}$, the second part of the algorithm outputs a sample $pk$ from a distribution of public keys. We say that the resulted distribution of public keys $\PK_{\lambda}$ is an $\SK_{\lambda}$-distribution. More details about the language of finite distributions and the relations between the distributions associated to an encryption scheme are explained in \cite{BLPT23}.

In an effort to keep the notation simple, we sometimes omit the subscript $\lambda$ when referring to a certain ensemble of finite distributions. We will use upper case calligraphic letters for the name of the distributions and lower case italic letters for samples from various distributions.

Every PPT algorithm gives rise to a probability distribution. Throughout this work, every written identity that involves the output of a PPT algorithm is assumed to hold with overwhelming probability over the randomness introduced by the PPT algorithm. This means that the probability of failure of that identity is negligible. For brevity, we shall avoid the repeated use of the phrase ``with overwhelming probability" to indicate this situations.

\section{Definitions}\label{sec:defex}

We first recall two technical definitions about ensembles of finite distributions discussed in more detail in \cite{BLPT23}.

\begin{definition} \label{efsampl} An ensemble $\{\mathcal{X}_{\lambda} \}_{\lambda}$ of finite distributions is polynomial-time constructible if there exists a PPT algorithm $A$ such that $A(1^{\lambda})=\mathcal{X}_{\lambda}$, for every $\lambda$. An $\{\mathcal{X}_{\lambda} \}_{\lambda}$-ensemble of finite distributions $\{(\mathcal{Y}_{\lambda}, \varphi_{\lambda}:\mathcal{Y}_{\lambda} \rightarrow \mathcal{X}_{\lambda}) \}_{\lambda}$ is polynomial-time constructible on fibers if there exist a PPT algorithm $A$, such that for any $x_{\lambda} \leftarrow \mathcal{X}_{\lambda}$ we have $A(1^{\lambda}, x_{\lambda}) = \mathcal{Y}_{\lambda}|_{\mathcal{X}_{\lambda} = x_{\lambda}}$, where $\mathcal{Y}_{\lambda}|_{\mathcal{X}_{\lambda} = x_{\lambda}}$ is the fiber distribution over $x_{\lambda}$.
\end{definition}

We will also use the following notion of computational (or polynomial) indistinguishability from \cite{GM82} and \cite{Go90}.
\begin{definition} \label{def:ind} Two ensembles of finite distributions $\{\mathcal{X}_{\lambda} \}_{\lambda}$ and $\{\mathcal{Y}_{\lambda} \}_{\lambda}$ are called computationally indistinguishable if for any PPT distinguisher $D$, the quantity $$\vert \mathrm{Pr}\left\{D(\mathcal{X}_{\lambda})=1  \right\} - \mathrm{Pr}\left\{ D(\mathcal{Y}_{\lambda})=1 \right\}\vert$$ is negligible as a function of $\lambda$.
\end{definition}

Next, we review some notions related to public key encryption schemes, homomorphic encryption and bridges between encryption schemes.

\begin{definition}[$PKE$]
A public key encryption  scheme consists of three PPT algorithms

\begin{equation*}
\mathcal{E} = ({\rm KeyGen}, {\rm Enc}, {\rm Dec})  
\end{equation*}

\noindent as follows:

\begin{itemize}
\item[$\bullet$] {\rm KeyGen:} The algorithm $(sk, pk) \leftarrow {\rm KeyGen}(1^{\lambda})$ takes a unary
representation of the security parameter and outputs a secret key $sk$ and a public key $pk$.
\item[$\bullet$] {\rm Enc:} The algorithm $c \leftarrow {\rm Enc}(pk,m)$ takes the public key $pk$ and a single message $m\in \cM$ and outputs a ciphertext $c \in \cC$.
\item[$\bullet$] {\rm Dec:} The algorithm $m^{*} \leftarrow {\rm Dec}(sk, c)$ takes the secret key $sk$ and a ciphertext $c$ and outputs a message $m^{*}$.
\end{itemize}
\end{definition}

\noindent We shall always assume that a public key encryption scheme is {\it correct}, i.e. it satisfies the following property:

\noindent {\bf Correct Decryption}: The scheme $\mathcal{E}$ is correct if for all $m \in \mathcal{M}$ and all pairs of keys $(sk, pk)$ outputted by $\text{KeyGen}(1^{\lambda})$,

\begin{equation*}
{\rm Dec}(sk, \mathrm{Enc}(pk,m)) = m,  
\end{equation*}

\noindent with overwhelming probability over the finite distribution ${\rm Enc}(pk,m)$.

\medskip

A (public key) homomorphic encryption scheme is a $PKE$ scheme such that its ${\rm KeyGen}$ algorithm outputs an additional evaluation key $evk$ (besides $sk$ and $pk$), which is used by an additional PPT evaluation algorithm ${\rm Eval}$. To be precise, there is a fourth PPT algorithm:

\begin{itemize}
    \item[$\bullet$] $\mathrm{Eval}$: The algorithm takes the evaluation key $evk$, a representation of a boolean circuit $C: \mathcal{M}^{\ell} \rightarrow \mathcal{M}$ from a set of evaluable circuits $\cL$, and a set of $\ell$ ciphertexts $c_1, ..., c_{\ell}$, and outputs a ciphertext $c^* \leftarrow {\rm Eval}(evk,C,c_1, \ldots, c_{\ell})$.
\end{itemize}

\noindent Relative to the evaluation algorithm, a public-key homomorphic encryption scheme is assumed to satisfy the following correctness property:

\noindent {\bf Correct Evaluation}: The scheme $\mathcal{E}$ correctly evaluates all boolean circuits in $\cL$ if for all keys $(sk, pk, evk)$ outputted by $\text{KeyGen}(1^{\lambda})$, for all circuits $C : \mathcal{M}^{\ell} \rightarrow \mathcal{M}$, $C \in \mathcal{L}$ , and for all $m_i \in \mathcal{M}$, $1 \leq i \leq \ell$, it holds that
\begin{equation*}
{\rm Dec}(sk, {\rm Eval}(evk, C, \Enc(pk,m_1)
, ..., \Enc(pk,m_{\ell})) = C(m_1, ..., m_{\ell}),
\end{equation*}
\noindent with overwhelming probability over the randomness of ${\rm Enc}$ and ${\rm Eval}$.

\medskip

\noindent\textbf{Remark}. Nowadays, although omitted from most definitions, it is understood that all homomorphic encryptions schemes should satisfy a certain compactness property, namely that there exists a polynomial $s = s(\lambda)$ such that the output length of ${\rm Eval}$ is at most $s$ bits long, regardless of $C \in \cL$ or the number of inputs.

\medskip

We say that a public-key homomorphic encryption scheme is a {\it fully homomorphic encryption (FHE) scheme (over $\mathcal{M}$)} if the scheme correctly evaluates all possible boolean circuits $C : \mathcal{M}^{\ell} \rightarrow \mathcal{M}$, for $\ell \in \mathbb N$.

\medskip

The authors of \cite{BLPT23} proposed a general definition for an algorithm that publicly transforms encrypted data from one scheme to another. We recall here their definition of a \textit{bridge} between two encryption schemes.

\begin{definition}\label{bridge}
Let $\mathcal{E}_i = ({\rm KeyGen}_i, {\rm Enc}_i, {\rm Dec}_i)$, $i \in \{1, 2\}$ be two PKE schemes. A bridge $\mathbf{B}_{\iota, f}$  from $\mathcal{E}_1$ to $\mathcal{E}_2$ consists of:
\begin{enumerate}
    \item A function $\iota: \mathcal{M}_1\to \mathcal{M}_2$ that is computable by a deterministic polynomial time algorithm, where $\cM_i$ is the plaintext space of the scheme $\mathcal{E}_i$.

    \item A PPT {\it bridge key generation} algorithm, which has the following three stages. First, the algorithm gets the security parameter $\lambda$ and uses it to run $\mathrm{KeyGen}_1$ in order to obtain a pair of keys $sk_1, pk_1$. In the second stage the algorithm uses $sk_1$ to find a secret key $sk_2$ of level $\lambda$ for $\mathcal{E}_2$, and then uses the second part of $\mathrm{KeyGen}_2$ to produce $pk_2$. In the final stage, the algorithm takes as input the quadruple $(sk_1, pk_1, sk_2, pk_2)$ and outputs a bridge key $bk$.
    
    \item  A PPT algorithm $f$ which takes as input the bridge key $bk$ and a ciphertext $c_1\in \mathscr{C}_1$ and outputs a ciphertext $c_2\in\mathscr{C}_2$, such that 

$$
{\rm Dec}_2(sk_2, f(bk, {\rm Enc}_1(pk_1, m)))= \iota(m).
$$

\end{enumerate}
\end{definition}

\noindent\textbf{Remark}. To simplify the discussion, when $\iota : \cM_1 \to \cM_2$ is clear from the context, we will abuse notation and write $f: \cE_1 \to \cE_2$ instead of the bridge $\bridge_{\iota,f}$.

\medskip

We now give the definition/construction of the $k$-th fiber power encryption scheme $\cE^{(k)},$ associated to any encryption scheme $\cE$ and any positive integer $k$.

\begin{definition}
\label{def: fiberpower}
Let $\mathcal{E}=({\rm KeyGen}, {\rm Enc}, {\rm Dec})$ be a PKE scheme, such that $\cM$ and $\cC$ are its plaintext and ciphertext spaces. For any positive integer $k$, the $k$-th \textit{fiber power} of $\mathcal E$ is the encryption scheme $\mathcal{E}^{(k)} = (\mathrm{KeyGen},\mathrm{Enc}^{(k)}, \mathrm{Dec}^{(k)})$, whose plaintext and ciphertext are the $k$-th Cartesian products $\cM^k$ and $\cC^{k}$ respectively. Moreover, the key generation algorithm is identical to the one of $\mathcal{E}$, outputting a pair $(sk,pk)$. The encryption and decryption algorithms of $\mathcal{E}^{(k)}$ are
$$\mathrm{Enc}^{(k)}(pk,(m_1, \ldots, m_k)) = \left(\mathrm{Enc}(pk,m_1), \ldots,  \mathrm{Enc}(pk,m_k)\right)$$
and
$$\mathrm{Dec}^{(k)}(sk,(c_1,\ldots, c_k)) = \left( \mathrm{Dec}(sk,c_1),\ldots, \mathrm{Dec}(sk,c_k) \right),$$
for any $(m_1, \ldots, m_k) \in \cM^k$ and $(c_1, \ldots, c_k) \in \cC^k$.
\end{definition}

\begin{remark}
 In the category $\fd$ of finite distributions, the encryptions of the defined scheme $\cE^{(k)}$ are products of $k$ copies of the distributions of encryptions from $\cE$ regarded as $\PK$- distributions, as defined in Section 2.1 of \cite{BLPT23}. 
\end{remark}

We end this section with a simple example of a bridge from $\cE^{(2)}$ to $\cE$, whose importance will become apparent in the next section. Moreover, construction of certain bridges from $\cE^{(k)}$ to $\cE$, for general $k$, will play a crucial role in our work presented here.  

\begin{example} We recall briefly the definition of a basic homomorphic LWE encryption scheme with plaintext space $\cM =  \Z_2$. The scheme $\mathcal{E}$ is parameterized by a dimension $n$, a ciphertext modulus $q=n^{O(1)}$ and a randomized rounding function $\chi : \mathbb R \to \mathbb Z$. In order to achieve an additive homomorphic property, we impose that the rounding function has error distribution satisfying $|\chi(x)-x| < q/8$. The secret key of the encryption scheme is a vector $sk \in \mathbb{Z}_{q}^n$, which is chosen uniformly at random. For simplicity, we will assume that this scheme is symmetric, namely that the public key $pk=sk$. The encryption of a message $m \in \Z_2$ under the key $sk \in \Z_q^n$ is given by

$$\mathrm{Enc}(pk,m) = \left(a, \chi(a \cdot sk + mq/2) \, \, \mathrm{mod} \, \,  q \right) \in \Z_q^{n+1},$$
where $a \leftarrow \mathbb Z_q^{n}$ is chosen uniformly at random. A ciphertext $(a,b)$ is decrypted as follows
$$\mathrm{Dec}(sk, (a,b)) = \lfloor 2(b-a \cdot sk)/q \rceil  \, \, \mathrm{mod} \, \, 2.$$
Note that correction of the decryption follows from the assumption on the rounding error of $\chi$. Using the homomorphic addition that can be performed on this scheme, we can easily construct a bridge from $\mathcal{E}^{(2)}$ to $\mathcal{E}$, as we now explain.

The function $\iota : \cM \times \cM \to \cM$ is defined as $\iota(m_1,m_2) = m_1 + m_2$, where the addition is performed in $\Z_2$.

The key generation algorithm of the bridge uses the key generation algorithm of $\mathcal E^{(2)}$ to output a pair $(sk,pk)$ and then, in the second stage, associates the same pair $(sk,pk)$ to $\mathcal E$. Finally, the algorithm outputs an empty bridge key $bk$. 

The ppt algorithm $f$ takes a pair of ciphertexts $(a_1, b_1), (a_2, b_2) \in \Z_q^{n+1} $ and computes
$$
f((a_1,b_1), (a_2, b_2)) := (a_1+a_2, b_1+b_2).$$
The correctness property of the bridge follows immediately from the fact that the rounding error of $\chi$ is less than $q/8$, in other words the corectness is implied by the additive homomorphic property of the scheme $\mathcal E$. 
\end{example}

\begin{remark} \label{rmk:2} Note that in the setting above, given a pair of elements in the ciphertext space $(a_1,b_1), (a_2, b_2) \in \Z_q^{n+1}$, it is not true in general that
$$\mathrm{Dec}(sk,f((a_1,b_1), (a_2, b_2))) = \mathrm{Dec}(sk,(a_1,b_1)) + \mathrm{Dec}(sk, (a_2,b_2)).$$
To see this, if $\mathrm{Dec}(sk,(a_i,b_i)) = m_i$, then the error $b_i-a_i \cdot s_k - m_iq/2 \, \, \textrm{mod} \, \, q$ might not lie in the interval $(-q/8, q/8)$, for some $i \in \{ 1,2 \}$. In this case, $b_1+b_2-(a_1+a_2) \cdot sk - (m_1+m_2)q/2 \, \, \textrm{mod} \, \, q$ might not belong to $(-q/4 , q/4)$, so that $$\mathrm{Dec}(sk,f((a_1,b_1), (a_2, b_2))) =\mathrm{Dec}(sk,(a_1+a_2, b_1+b_2))$$
is not guaranteed to be equal to $m_1+m_2$.
\end{remark}

\section{Complete and composable bridges}\label{sec:CompComp}

In general, as we noticed in the last remark of the previous section, the square of the following diagram is not commutative: 
\begin{center}
    \begin{tikzcd}
    {\rm Enc}_1(\cM_1)  \arrow[r,hook] &\cC_1 \arrow[rr, "f"] \arrow[d, "\mathrm{Dec}_1"] & & \mathcal{C}_2 \arrow[d, "\mathrm{Dec}_2"] \\
    &\mathcal{M}_1 \arrow[rr, "\iota"] & & \mathcal{M}_2 
    \end{tikzcd}
\end{center}
 where $f$ is a bridge between encryption schemes. Indeed, the definition of a bridge is imposing commutativity only for the restriction of $f$ to $\mathrm{Enc}_1(\mathcal M_1) \subseteq \mathcal C_1$, i.e. to the subset of the ciphertext space which consists of \emph{fresh encryptions}. This observation leads to the following definition. 

\begin{definition}
\label{complete:bridge}
Let $\mathcal{E}_i = (\cM_i, \cC_i, {\rm KeyGen}_i, {\rm Enc}_i, {\rm Dec}_i)$, $i \in \{1, 2\}$ be two encryption schemes. A bridge $\mathbf{B}_{\iota, f}$  from $\mathcal{E}_1$ to $\mathcal{E}_2$ is called \textbf{complete} if the PPT algorithm $f$, which takes as input the bridge key $bk$ and \emph{any} ciphertext $c_1 \in \mathcal C_1$ satisfies the equality
$$\mathrm{Dec}_2(sk_2,f(bk, c_1)) = \iota (\mathrm{Dec}_1(sk_1, c_1)),$$
for any the triple $(sk_1, sk_2, bk)$ outputted by the key generation algorithm of the bridge.
\end{definition}

\noindent Let us notice the following easy fact: 

\begin{remark}
Suppose that $\mathcal E=(\mathrm{KeyGen}, \mathrm{Enc}, \mathrm{Dec})$ is an encryption scheme. If for every pair $(sk, pk) \leftarrow \mathrm{KeyGen}(1^{\lambda})$ and every $c \in \mathcal{C}$  there exists a $m \in \mathcal M$ such that $c \in \mathrm{Enc}(pk,m)$, then any bridge from $\mathcal{E}$ to any other scheme, is complete.  
\end{remark}

As a consequence, the bridge from the Goldwaser-Micali (GM) encryption scheme to the Sander-Young-Yung (SYY) encryption scheme described in \cite{BLPT23} is complete. Indeed, the GM encryption scheme does enjoy the property described in the above remark: for any two distinct primes $p,q$, every element from the ciphertext space $J_1(pq)$ can be obtained as an encryption $\mathrm{Enc}_{GM}(pq, m)$ of some plaintext $m \in \mathbb Z_2$.

An example of a different flavor arises from the modulus switching procedure, which is frequently used in homomorphic encryption schemes for different purposes (such as reducing the noise of a ciphertext). 

\begin{example}
Let $q, Q$ be two integers such that $q$ divides $Q$. For simplicity, assume that $q , Q \equiv 2 \pmod{4}$. We let ${\rm LWE}^{2/q}$ and ${\rm LWE}^{2/Q}$ be the LWE encryption schemes with plaintext spaces $\mathcal M=\mathbb Z_2$ and ciphertext modulus $q$ and $Q$, respectively (see \cite{DM15}).

Recall that a ciphertext encrypting the message $m$ in ${\rm LWE}^{2/q}$ is of the form $(a,b)$, where $a \leftarrow \mathbb{Z}_q^n$ is sampled uniformly and $b=\langle a,s \rangle + m \cdot \frac{q}{2} + e$, where $e$ is drawn from the noise distribution.

The decryption can be regarded as a map from $\mathbb Z_q^n \times \mathbb Z_q$ to $\mathbb Z_2$, given by
$$\mathrm{Dec}_{q} \left(a,b \right) = \left\{ \begin{array}{l} 0, \text{ if } b- \langle a,s \rangle \in (-q/4,q/4) \\
1, \text{ otherwise.}\end{array} \right. $$

We define a bridge $f$ from ${\rm LWE}^{2/q}$ to ${\rm LWE}^{2/Q}$, as follows. In the second stage, the bridge key generation algorithm takes as input $s \in \mathbb Z_q^n$, the secret key of ${\rm LWE}^{2/q}$, and generates the same key for ${\rm LWE}^{2/Q}$, viewed in $\mathbb Z_Q^n$. Finally, this algorithm outputs an empty bridge key.

The algorithm $f$ will take as input a ciphertext $(a,b) \in \mathbb Z_q^n \times \mathbb Z_q$ and computes
$\left( \frac{Q}{q}  \cdot a, \frac{Q}{q} \cdot b \right) \in \mathbb Z_Q^{n} \times \mathbb Z_Q$. It is not hard to see that this bridge is complete.

This resembles the modulus-switching procedures  used initially to simplify the decryption circuit in \cite{BV11} and later to reduce the noise accumulated after homomorphic multiplication in \cite{BGV12}. We note that in modulus-switching procedures the ciphertexts are converted from a larger modulus $Q$, to a smaller one $q$, whereas in the example described above the procedures go from a smaller modulus to a larger one.

However, let us remark that, one can always modify the decryption algorithm in ${\rm LWE}^{2/q}$ to obtain a complete bridge to ${\rm LWE}^{2/Q}$ even when the modulus $Q$ is smaller than $q$.
\end{example}

It will become apparent in the following that the completeness property of a bridge plays an important role in being able to move an encryption through a chain of bridges without performing any decryption. This triggers one of the main notions presented in this paper, namely the composability of bridges which shall be discussed in details in what follows.

Given two bridges $f : \mathcal E_1 \to \mathcal E_2$, $g: \mathcal E_2 \to \mathcal E_3$, between encryption schemes $\mathcal E_1$ and $\mathcal E_2$ respectively $ \mathcal E_2$ and $\mathcal E_3$, it is desirable to be able to use them, in order to convert a ciphertext from $\mathcal E_1$ to $\mathcal E_3$ while preserving the underlying plaintext. If possible, such a procedure should resemble the composition of two functions. In particular, it is obvious that to achieve such a conversion, the Key Generation algorithms of $f$ and $g$ should not run independently, as we now explain in the following construction/definition: 

\begin{definition}[Composition of bridges] \label{def:compbridges} Let $f: \mathcal E_1 \to \mathcal E_2$ and $g: \mathcal E_2 \to \mathcal E_3$ be bridges between the encryption schemes $\mathcal E_1, \mathcal E_2$, and $ \mathcal E_2, \mathcal E_3$, respectively. The composition of $g \circ f$ consists of the following procedures:
\begin{enumerate}
    \item The function  $\iota_{g \circ f} := \iota_g \circ \iota_f : \mathcal P_1 \to \mathcal P_3$, which is computable by a deterministic polynomial time algorithm.
    \item A PPT bridge key generation algorithm, which runs as follows. First, it takes the security parameter $\lambda$ and runs the Key Generation algorithm of $f$ to obtain $(sk_1, pk_1, sk_2, pk_2, bk_f)$. After that, it only uses the second and third stages of the Key Generation of $g$ to obtain  $(sk_3, pk_3, bk_g)$. Finally, it outputs the bridge key $bk_{g \circ f} = (bk_f, pk_2, bk_g)$.
\item A PPT algorithm $g \circ f$ which takes as input a ciphertext $ c_1 \in \mathcal C_1$ and outputs $g(bk_g, f(bk_f, c_1))$.
\end{enumerate}

\end{definition}

A first observation is that, unfortunately, the composition of two bridges is not necessarily a bridge. Indeed, due to the fact that for $m \in \mathcal P_1$, the ciphertext $f(bk_1, \mathrm{Enc}_1(pk_1,m))$ is not necessarily a fresh encryption of $\iota_f(m)$, the correctness properties of $f$ and $g$ do not imply that
$$\mathrm{Dec}_{3}(sk_3, (g \circ f) (bk_{g \circ f},\mathrm{Enc}_1(pk_1,m)) = \iota_{g \circ f}(m).$$

\noindent This feature will also be present in the case of bridges constructed from circuits in section \ref{sec:BrigCirc} and it was pointed out before in a presentation by Micciancio \cite{Mic22}.

On the other hand, if the second bridge of a pair of bridges is complete, then the composition is also a bridge as the following proposition states.

\begin{proposition} \label{PropComp} Let $f: \mathcal E_1 \to \mathcal E_2$ and $g: \mathcal E_2 \to \mathcal E_3$ be bridges between the encryption schemes $\mathcal E_1, \mathcal E_2$ and $\mathcal E_3$, such that $g$ is complete. Then, the composition
$g \circ f : \mathcal E_1 \to \mathcal E_3$ is a bridge.
\end{proposition}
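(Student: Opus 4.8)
The plan is to verify the three defining conditions of a bridge (Definition \ref{bridge}) for the triple $(\iota_{g\circ f}, \text{KeyGen}_{g\circ f}, g\circ f)$ given in Definition \ref{def:compbridges}, using the completeness of $g$ only in the third (correctness) condition. First I would check condition (1): $\iota_{g\circ f} = \iota_g \circ \iota_f$ is a composition of two deterministic polynomial-time computable maps $\mathcal M_1 \to \mathcal M_2 \to \mathcal M_3$, hence is itself deterministic polynomial-time computable. Then condition (2): the key generation algorithm described in Definition \ref{def:compbridges} runs $\text{KeyGen}_f$ (which internally runs $\text{KeyGen}_1$ to get $(sk_1,pk_1)$, derives $sk_2$ from $sk_1$, and produces $pk_2$ via the second part of $\text{KeyGen}_2$), then runs the second and third stages of $\text{KeyGen}_g$ — which derive $sk_3$ from $sk_2$ and produce $pk_3$ via the second part of $\text{KeyGen}_3$ — and finally outputs $bk_{g\circ f} = (bk_f, pk_2, bk_g)$. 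One must observe this is PPT (composition of PPT stages) and that it has exactly the three-stage shape required: stage one produces $(sk_1,pk_1)$ via $\text{KeyGen}_1$; stage two produces $sk_3$ from $sk_1$ (through the intermediate $sk_2$) and $pk_3$ via the second part of $\text{KeyGen}_3$; stage three assembles the bridge key. Here $pk_2$ is retained in the bridge key precisely because the algorithm $g\circ f$ needs $bk_f$ (which implicitly carries $pk_1$) and $bk_g$ to run, and $pk_2$ records the intermediate public key used.

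The substantive step is condition (3), the correctness of decryption: for every $m \in \mathcal M_1$ one needs
$$\mathrm{Dec}_3\bigl(sk_3,\ (g\circ f)(bk_{g\circ f}, \mathrm{Enc}_1(pk_1,m))\bigr) = \iota_{g\circ f}(m).$$
By definition $(g\circ f)(bk_{g\circ f}, c_1) = g(bk_g, f(bk_f, c_1))$. Set $c_1 = \mathrm{Enc}_1(pk_1,m)$ and $c_2 = f(bk_f, c_1)$; this $c_2$ is some element of $\mathcal C_2$, not necessarily a fresh encryption. The correctness of the bridge $f$ gives $\mathrm{Dec}_2(sk_2, c_2) = \iota_f(m)$. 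Now invoke completeness of $g$: since $g$ is complete, for \emph{any} ciphertext $c_2 \in \mathcal C_2$ we have $\mathrm{Dec}_3(sk_3, g(bk_g, c_2)) = \iota_g(\mathrm{Dec}_2(sk_2, c_2))$. Applying this to our $c_2$ and substituting $\mathrm{Dec}_2(sk_2,c_2) = \iota_f(m)$ yields
$$\mathrm{Dec}_3\bigl(sk_3, g(bk_g, c_2)\bigr) = \iota_g(\iota_f(m)) = \iota_{g\circ f}(m),$$
which is exactly what is required. (All equalities here are understood to hold with overwhelming probability over the internal randomness, as stipulated in the Preliminaries; and $bk_g$ in this chain is the bridge key produced by the composed key generation, whose relevant keys $sk_2, sk_3$ are consistent by construction.)

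I do not expect a serious obstacle: the whole point of the completeness hypothesis on $g$ is that it upgrades the "correctness on fresh encryptions only" guarantee of an ordinary bridge to a guarantee valid on the arbitrary ciphertext $f(bk_f, c_1)$, which is precisely the ciphertext that sabotaged correctness for a general composition in the discussion following Definition \ref{def:compbridges}. The only mild point of care is bookkeeping: making sure the secret keys $sk_2$ (output by stage two of $\text{KeyGen}_f$) and the $sk_3$ derived from it by stage two of $\text{KeyGen}_g$ are the same objects fed to $\mathrm{Dec}_2$ and $\mathrm{Dec}_3$ in the completeness equation for $g$, so that the two correctness statements chain correctly. Once that compatibility of the key-generation stages is spelled out, the argument is a two-line substitution, and it is worth remarking explicitly (for contrast with the later Example \ref{ex:halfsk}) that completeness of $f$ is \emph{not} needed here — only completeness of the second bridge $g$ matters for the composition to be a bridge.
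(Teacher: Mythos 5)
Your proposal is correct and follows essentially the same route as the paper's proof: the paper likewise obtains correctness by applying the completeness of $g$ to the (possibly non-fresh) ciphertext $f(bk_f, \mathrm{Enc}_1(pk_1,m))$ and then the ordinary correctness of $f$, yielding $\iota_g(\iota_f(m)) = \iota_{g\circ f}(m)$. Your additional verification of conditions (1) and (2) and the remark that completeness of $f$ is not needed are sound, just more explicit than the paper, which treats those points as immediate from Definition \ref{def:compbridges}.
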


\begin{proof}
Since $g$ is complete, the correctness property for $g$ holds for any ciphertext. In particular, it holds for ciphertexts of the form $f(bk_1, \mathrm{Enc}_1(pk_1,m))$, so that we have:

\begin{eqnarray*}
\mathrm{Dec}_{3}(sk_3, (g \circ f) (bk_{g \circ f},\mathrm{Enc}_1(pk_1,m))) & = &\iota_g(\mathrm{Dec}_{2}(sk_2, f (bk_f,\mathrm{Enc}_1(pk_1,m)))) \\
& = & \iota_g(\iota_f(m)) = \iota_{g \circ f}(m),\\
\end{eqnarray*}
where the first equality follows from the completeness of $g$ and the second equality from the correctness property of $f$.
\end{proof}

To end the discussion of this section we introduce the following definition:

\begin{definition}\label{def:bridge_comp}
    A pair of bridges $(f,g)$ is called \emph{composable} if $g \circ f$ is defined and is also a bridge. 
\end{definition}

We have the following consequence of the last proposition:

\begin{corollary}
    If $f: \mathcal E_1 \to \mathcal E_2$ and $g: \mathcal E_2 \to \mathcal E_3$ are two complete bridges then they are composable and their composition is complete.
\end{corollary}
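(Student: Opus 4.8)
The plan is to prove the corollary as a direct consequence of Proposition~\ref{PropComp} together with a short verification that completeness is preserved under composition. The first assertion---that $(f,g)$ is composable---is immediate: since $g$ is complete, Proposition~\ref{PropComp} tells us that $g \circ f$ is a bridge, and by Definition~\ref{def:bridge_comp} this is exactly what composability means. So the only real content is showing that $g \circ f$ is itself \emph{complete}, i.e. that the defining equation of Definition~\ref{complete:bridge} holds for the composed bridge on \emph{every} ciphertext $c_1 \in \mathcal C_1$, not merely on fresh encryptions.

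For that, I would unwind the definitions exactly as in the proof of Proposition~\ref{PropComp}, but now starting from an arbitrary $c_1 \in \mathcal C_1$ rather than from $\mathrm{Enc}_1(pk_1,m)$. Recall from Definition~\ref{def:compbridges} that $(g\circ f)(bk_{g\circ f}, c_1) = g(bk_g, f(bk_f, c_1))$, where the key-generation algorithm of $g \circ f$ produces a tuple containing $(sk_1, sk_2, bk_f)$ from the first bridge and $(sk_3, bk_g)$ from the second stages of $g$. The key steps are then: (i) apply the completeness of $g$ to the ciphertext $f(bk_f, c_1) \in \mathcal C_2$, which yields $\mathrm{Dec}_3(sk_3, g(bk_g, f(bk_f, c_1))) = \iota_g(\mathrm{Dec}_2(sk_2, f(bk_f, c_1)))$; (ii) apply the completeness of $f$ to the ciphertext $c_1 \in \mathcal C_1$, which yields $\mathrm{Dec}_2(sk_2, f(bk_f, c_1)) = \iota_f(\mathrm{Dec}_1(sk_1, c_1))$; (iii) combine these and use $\iota_{g\circ f} = \iota_g \circ \iota_f$ to conclude
$$\mathrm{Dec}_3(sk_3, (g\circ f)(bk_{g\circ f}, c_1)) = \iota_g(\iota_f(\mathrm{Dec}_1(sk_1, c_1))) = \iota_{g\circ f}(\mathrm{Dec}_1(sk_1, c_1)),$$
which is precisely the completeness condition for $g \circ f$.

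There is no serious obstacle here; the argument is essentially the same chain of two equalities as in Proposition~\ref{PropComp}, with the single difference that in step~(ii) one invokes the completeness of $f$ (valid for all $c_1$) in place of the plain correctness of $f$ (valid only on fresh encryptions). The one point worth stating carefully is that the secret keys $sk_1, sk_2, sk_3$ appearing in steps (i) and (ii) are the \emph{same} ones produced by the composed key-generation algorithm of Definition~\ref{def:compbridges}---this is why the two completeness statements can be chained---but this is immediate from how $bk_{g\circ f}$ and the associated keys are assembled. I would therefore present the proof in three or four lines, mirroring the display in the proof of Proposition~\ref{PropComp}.
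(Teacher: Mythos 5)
Your proposal is correct and follows exactly the route the paper intends: the paper's own proof simply says the corollary ``follows immediately using the same argument as in the proof of the last Proposition,'' and your write-up is precisely that argument spelled out, with the completeness of $f$ replacing its plain correctness so the chain of equalities holds for arbitrary $c_1 \in \mathcal{C}_1$. Your added remark about the secret keys being the ones produced by the composed key-generation algorithm is a fair point of care but introduces nothing beyond the paper's intended argument.
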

 
\begin{proof}
    The proof follows immediately using the same argument as in the proof of the last Proposition.
\end{proof}

\begin{remark}\label{rem: categ}
    Since complete bridges behave well with respect to composition, they form the class of morphisms of a category.
\end{remark}
\section{On the security of composition of bridges}\label{sec: seccomp}

The aim of this section is to investigate the security of the composition of two bridges. We shall introduce first the security notions related to the security of a bridge without giving all the details (for a full account see \cite{BLPT23}). Then we prove our main theorems stating that the composition of secure bridges is secure under a certain technical condition. In the next section we shall prove that this condition is satisfied by an important class of bridges.

Given a public-key encryption scheme $\mathcal{E}=(\KeyGen, \Enc, \Dec)$ and an adversary $\mathcal{A}$ consider the following the IND-CPA experiment $\mathrm{Expr}^{\mathrm{IND-CPA}}[\mathcal{A}]$:

\medskip

1. Run $\KeyGen(1^{\lambda})$ to obtain the pair of keys $(sk, pk)$.

2. The key $pk$ is given to the adversary $\mathcal{A}$. It outputs a pair of messages $m_0$, $m_1$ of its choice.

3. The challenger chooses a uniform bit $b \in \{0, 1\}$, and then a ciphertext
$c \leftarrow   \Enc(pk, m_b)$ is computed and given to $\mathcal{A}$.

4. The adversary $\mathcal{A}$ outputs a bit $b'$.

5. The output of the experiment is defined to be 1 if $b' = b$, and
0 otherwise.

\begin{definition}[IND-CPA Security] \label{IND-CPA} 
The advantage of adversary $\mathcal{A}$ against the IND-CPA security of the scheme is $\mathcal{E}$ is defined by 

\begin{equation*}
    \mathrm{Adv}^{\mathrm{IND-CPA}} [\mathcal{A}](\lambda):= | \mathrm{Pr}\{\mathrm{Expr}^{\mathrm{IND-CPA}}[\mathcal{A}] =1\} - \mathrm{Pr}\{\mathrm{Expr}^{\mathrm{IND-CPA}}[\mathcal{A}] =0 \}|,
\end{equation*}

\noindent where the probability is over the randomness of $\mathcal{A}$ and of the experiment. A public key encryption scheme $\mathcal{E}$ has indistinguishable encryptions under  chosen-plaintext attack (or is IND-CPA-secure) if for any probabilistic polynomial-time adversaries $\mathcal{A}$  there exists a negligible function $\mathrm{negl}$ such that 

\begin{equation*}
 \mathrm{Adv}^{\mathrm{IND-CPA}} [\mathcal{A}](\lambda) = \negl(\lambda).    
\end{equation*}
\end{definition}

\begin{remark}\label{rem:prob}
As it is well known (see for instance Proposition 5.9 in \cite{BR05}), an encryption scheme is IND-CPA-secure if and only if for any PPT-adversary $\mathcal{A}$ there exists a negligible function $\negl$ such that

\begin{equation*}
    \mathrm{Pr}[\mathrm{Expr}^{\mathrm{IND-CPA}}[\mathcal{A}]=1] (\lambda)\leq \frac{1}{2} + \negl(\lambda).
\end{equation*}   
\end{remark}

In \cite{BLPT23}, the authors associated to any bridge $f:\mathcal{E}_1\to \mathcal{E}_2$ an encryption scheme $\mathcal{G}_f$ (called the graph of the bridge) and defined the security of the bridge as being the security of the associated scheme $\mathcal{G}_f.$ 
On the other hand, they proved that the security of $\mathcal{G}_f$ is equivalent to the security of $\mathcal{E}_1[pk_2, bk_f]$ which is the encryption scheme $\mathcal{E}_1$ endowed with additional public information given by the public key of $\mathcal{E}_2$ and the bridge key of $f$ (for further details, see \cite[Theorem 1]{BLPT23}). 
Hereafter, we propose this equivalent notion as the security of a bridge because in practice it is much easier to work with it.

Since we will make use of the explicit construction of $\mathcal{G}_f$ in our arguments, we shall briefly recall its structure (for more details see the Section 3 of \cite{BLPT23}):
\begin{enumerate}
    \item[$\bullet$] The plaintext space of $\mathcal{G}_f$ is the same as the plaintext space of $\mathcal{E}_1$, that is $\mathcal{M}_1$. The ciphertext space is $\mathcal{C}_1\times \mathcal{C}_2,$ where $\mathcal{C}_i$ is the ciphertext space of $\mathcal{E}_i$, $\forall i \in \{ 1, 2\}.$
    \item[$\bullet$] The secret key of $\mathcal{G}_f$ is the pair: $sk_{\mathcal{G}_f}:=(sk_1,sk_2).$
    \item[$\bullet$] The public key is the triple: $pk_{\mathcal{G}_f}:=(pk_1,pk_2,bk_f).$
     \item[$\bullet$] The encryption algorithm is given by
\begin{equation*}
     \Enc_{\mathcal{G}_f}(pk_{\mathcal{G}_f}, m):=(\Enc_1(pk_1, m), f(bk_f,\Enc_1(pk_1, m)),
\end{equation*}     
\noindent for all $m \in \mathcal{M}_1.$
     \item[$\bullet$] The decryption algorithm is $\Dec_{\mathcal{G}_f}(sk_{\mathcal{G}_f}, (a,b)):=\Dec_1(sk_1, a).$
    
\end{enumerate}

In what follows we are interested in finding conditions that ensure the security of a bridge obtained as a composition of two composable bridges. 
The following easy result, can be seen as a necessary condition for the security of such a bridge.

\begin{proposition} Suppose $f: \mathcal E_1 \to \mathcal E_2$ and $g: \mathcal E_2 \to \mathcal E_3$ are composable bridges, such that $g \circ f$ is a secure bridge. Then $f$ is a secure bridge.
\end{proposition}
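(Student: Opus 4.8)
The plan is to exploit the characterisation of bridge security recalled above: a bridge $h:\cE\to\cE'$ is secure if and only if its graph scheme $\cG_h$ is IND-CPA-secure (equivalently, $\cE[pk',bk_h]$ is IND-CPA-secure, by \cite[Theorem 1]{BLPT23}). The guiding observation is that the public key of $\cG_{g\circ f}$ is $pk_{\cG_{g\circ f}}=(pk_1,pk_3,bk_{g\circ f})$ with $bk_{g\circ f}=(bk_f,pk_2,bk_g)$, so it already contains all of $pk_1$, $pk_2$ and $bk_f$ --- precisely the data forming $pk_{\cG_f}=(pk_1,pk_2,bk_f)$. Hence any IND-CPA adversary against $\cG_f$ can be simulated by an IND-CPA adversary against $\cG_{g\circ f}$ that ignores the extra components $pk_3$ and $bk_g$ of its input and reconstructs the $\cG_f$-challenge from its own challenge.

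Concretely I would argue by contraposition. Assume $f$ is not secure, so there is a PPT adversary $\cA$ with non-negligible advantage in $\mathrm{Expr}^{\mathrm{IND-CPA}}$ for $\cG_f$. Build an adversary $\cB$ against $\cG_{g\circ f}$ as follows: on input $pk_{\cG_{g\circ f}}=(pk_1,pk_3,bk_f,pk_2,bk_g)$ it passes the triple $(pk_1,pk_2,bk_f)$ to $\cA$; it relays the message pair $(m_0,m_1)$ returned by $\cA$; upon receiving its own challenge ciphertext $(c_1,c_3)$ --- where, by construction of $\cG_{g\circ f}$, $c_1\leftarrow\Enc_1(pk_1,m_b)$ and $c_3=g(bk_g,f(bk_f,c_1))$ --- it uses the bridge key $bk_f$ that it holds to compute $c_2:=f(bk_f,c_1)$ and hands $(c_1,c_2)$ to $\cA$; finally it outputs $\cA$'s bit.

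The key verification step is that $\cB$ reproduces exactly the distribution of $\cA$'s view in the genuine experiment for $\cG_f$. By Definition \ref{def:compbridges}, the bridge key generation of $g\circ f$ first runs the key generation of $f$ verbatim and only afterwards runs the second and third stages of the key generation of $g$; therefore the marginal distribution of $(pk_1,pk_2,bk_f)$ seen by $\cB$ is identical to that of $pk_{\cG_f}$. Conditioned on these keys, $(c_1,c_2)=(c_1,f(bk_f,c_1))$ with $c_1\leftarrow\Enc_1(pk_1,m_b)$ is precisely $\Enc_{\cG_f}(pk_{\cG_f},m_b)$. Consequently $\mathrm{Adv}^{\mathrm{IND-CPA}}[\cB](\lambda)=\mathrm{Adv}^{\mathrm{IND-CPA}}[\cA](\lambda)$ for every $\lambda$, so $\cB$ has non-negligible advantage, contradicting the assumed security of $g\circ f$; hence $f$ is secure.

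I do not expect a genuine obstacle: the argument is a perfect-simulation reduction, and the only points needing care are (i) that the simulator really possesses $bk_f$, which it does because $bk_f$ is a literal component of $bk_{g\circ f}$, and (ii) that the first coordinate of a $\cG_{g\circ f}$-ciphertext is exactly the $\cE_1$-encryption to which $f$ (and then $g$) is applied, so that recomputing $f(bk_f,c_1)$ yields the correct $\cG_f$-challenge. Alternatively one can phrase the whole proof in one line via the general principle that IND-CPA security is preserved under discarding public auxiliary data, applied to the pair $\cE_1[pk_3,bk_{g\circ f}]$ and $\cE_1[pk_2,bk_f]$.
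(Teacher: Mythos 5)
Your proposal is correct and, at bottom, is the same argument as the paper's: the paper's entire proof is the one-line observation you relegate to your final sentence, namely that $bk_{g\circ f}=(bk_f,pk_2,bk_g)$ already contains $(pk_2,bk_f)$, so the security of $\cE_1[pk_3,bk_{g\circ f}]$ (equivalently of $g\circ f$) implies that of $\cE_1[pk_2,bk_f]$ (equivalently of $f$) by discarding public auxiliary data. Your explicit reduction at the level of the graph schemes is a fleshed-out version of this, and the only point I would flag is a small convention-dependent subtlety in it: you have $\cB$ hand $\cA$ the pair $(c_1,f(bk_f,c_1))$, which is a perfect simulation of $\Enc_{\cG_f}(pk_{\cG_f},m_b)$ only if the two components of a $\cG_f$-ciphertext are built from the \emph{same} underlying $\cE_1$-encryption. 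The paper's definition of $\Enc_{\cG_f}$ is ambiguous on this, and its proof of Theorem \ref{thm:main} writes a $\cG_{\widehat{g}}$-ciphertext as $(a,f(bk_f,b),g(bk_g,f(bk_f,c)))$ with \emph{independent} encryptions $a,b,c$ of $m$; under that reading your simulated challenge $(c_1,f(bk_f,c_1))$ has correlated components and is not distributed as a genuine $\cG_f$-challenge, and $\cB$ cannot repair this since it does not know $m_b$. The formulation via $\cE_1[pk_2,bk_f]$, where the challenge is a single $\cE_1$-ciphertext accompanied by public data, sidesteps this entirely, which is presumably why the paper (and your closing remark) phrase the argument that way.
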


\begin{proof}
Since the bridge key  of $g\circ f$ is given by $bk_{g \circ f} = (bk_f, pk_2, bk_g)$, the security of $f$ follows from the above considerations.  
\end{proof}
On the other hand, if the composition $g \circ f$ is a secure bridge, it does not follow that $g$ is secure. Let us illustrate this with the following example.

\begin{example}
Assume that $f : \mathcal E_1 \to \mathcal E_2$ and $g : \mathcal E_2 \to \mathcal E_3$ are composable secure bridges such that $g \circ f$ is secure. By \cite[Proposition 1]{BLPT23}, the encryption schemes $\mathcal E_1$ and $\mathcal E_2$ are IND-CPA secure. For simplicity, assume that the common plaintext is $\mathcal M = \{0,1 \}.$ We now construct the encryption scheme $\mathcal E_2'$, as follows: this scheme is almost identical to $\mathcal E_2$, except that the encryption reveals the message. More precisely, $\mathrm{Enc}_{2}'(pk_2, m) = (\mathrm{Enc}_2(pk_2, m), m) $ is a concatenation of the encryption of $m$ in the scheme $\mathcal E_2$ with the plaintext $m$ itself. This new encryption scheme $\mathcal E_2'$ is obviously not secure.

On the other hand, we modify $f$ and $g$ to new bridges $f'$ and $g'$ in the following way. For any ciphertext $c \in \mathcal C_1$, we let $f'(c_1) = (f(c_1), b)$, where $b \leftarrow \{0,1 \}$ is chosen uniformly at random. The bridge $g'$ will perform as follows. For any ciphertext $c_2'=(c_2,b) \in \mathcal C_2'$, we let $g'(c_2') = g(c_2) \in \mathcal C_3$. It can be easily verified that $g' \circ f'$ is a secure bridge from $\mathcal E_1 \to \mathcal E_3$. However, the bridge $g'$ is not secure because the modified encryption scheme $\mathcal E_2'$ is not secure.

\end{example}

Next, we give an example of two (complete) secure bridges $f$ and $g$, which are composable, for which $g \circ f$ is an insecure bridge.

\begin{example}\label{ex:halfsk}
    Consider $\mathcal E$ an encryption scheme (not homomorphic or with any other special property). We shall assume that knowing half of the secret key does not harm the security. Denote by $\ell(\lambda)$ the bit length of its secret key. 
    We define $\mathcal E_1$ to be $\mathcal E$, while $\mathcal E_2$ is $\mathcal E$ with a modified cyphertext space, consisting of concatenations of  cyphertexts of $\mathcal E$ with strings of length $\ell(\lambda)/2$. The encryption algorithm in $\mathcal{E}_2$ outputs a concatenation of an encryption of $\mathcal{E}$ with a random sequence of length $\ell(\lambda)/2$. The decryption algorithm is just the decryption algorithm of $\mathcal{E}$ applied to the  first part of the ciphertext, i.e. the bit string corresponding to a ciphertext in the scheme $\mathcal{E}$. Also, consider $\mathcal E_3$ to be the scheme obtained by applying on $\mathcal{E}_2$ the same procedure  used in the 
    construction of $\mathcal E_2$ from $\mathcal{E}$.
    
The bridge key of $f: \mathcal E_1\to \mathcal E_2$ consists of the first half of the bit representation of the secret key of $\mathcal{E}$ and $f$ is defined by the concatenation of the identity map with the bridge key. The bridge $g:\mathcal{E}_2\to \mathcal E_3$ has a similar construction, only  that the bridge key of $g$ consists of the second half of the secret key of $\mathcal{E}$. 
    Now, by our assumptions, the schemes $\mathcal{E}[bk_f]$ and $\mathcal{E}[bk_g]$ are secure. Hence, the bridges $f$ and $g$ are secure. Moreover, $f$ and $g$ are clearly complete, thus composable.
    
     However, their composition reveals the entire secret key because the bridge key of the composition contains both halves of the secret key of $\mathcal{E}$.
 Schemes $\mathcal{E}$ satisfying the above conditions are plenty in the literature. For example any LWE scheme is such.
\end{example}

From these examples we learn that an additional condition on a pair of composable bridges is required in order to get a secure composition.

This motivates our main result, that is Theorem \ref{thm:main}. This result is heavily influenced by the main result of \cite{BLPT23}, which we recall first together with a more conceptual proof.

Suppose ${\bf B}_{\iota, f}$ is a bridge from $\mathcal{E}_1$ to $\mathcal{E}_2$. Recall that the \textit{bridge key generation algorithm} produces the following ensembles of $\{\SK_{1,\lambda} \}_{\lambda}$ distributions: $\{\PK_{1,\lambda} \}_{\lambda}$, $\{\PK_{2, \lambda} \}_{\lambda}$ and $\{\mathcal{BK}_{\lambda}\}_{\lambda}$. Let $\mathcal{F}$ be the ensemble of finite distributions of triples $(pk_1, pk_2, bk)$. Recall that $\pi_1 : \mathcal{F} \to \PK_1$ is a morphism of finite distributions, so $\mathcal{F}$ is an ensemble of $\PK_1$-distributions. In this theorem, we will make use for the first time of the Definitions \ref{efsampl} and \ref{def:ind}.

\begin{theorem} \label{thm:1}
Let ${\bf B}_{\iota, f}$ be a bridge between $\mathcal E_1$ and $\mathcal E_2$ and assume that the scheme $\mathcal{E}_1$ is IND-CPA secure. If there exists a  polynomial time constructible on fibers ensemble of ${\PK}_1$-distributions $\widetilde{\mathcal{F}}$ which is computational indistinguishable from $\mathcal{F}$, then the bridge ${\bf B}_{\iota, f}$ is IND-CPA secure.
\end{theorem}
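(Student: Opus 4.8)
The plan is to show that an adversary against the bridge—equivalently, by \cite[Theorem 1]{BLPT23}, against the scheme $\mathcal{E}_1[pk_2, bk_f]$—can be turned into an adversary against the IND-CPA security of $\mathcal{E}_1$ alone, at the cost of only a negligible loss coming from the indistinguishability $\widetilde{\mathcal{F}} \approx \mathcal{F}$. Concretely, I would work with the graph scheme $\mathcal{G}_f$ whose structure is recalled just before the theorem: its public key is the triple $(pk_1, pk_2, bk_f)$ sampled from $\mathcal{F}$ (viewed as a $\PK_1$-distribution via $\pi_1$), its secret key is $(sk_1, sk_2)$, and $\Enc_{\mathcal{G}_f}(pk_{\mathcal{G}_f}, m) = (\Enc_1(pk_1,m), f(bk_f, \Enc_1(pk_1,m)))$. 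Security of the bridge is by definition the IND-CPA security of this scheme.

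First I would define a hybrid experiment $\mathrm{Expr}'$ identical to $\mathrm{Expr}^{\mathrm{IND-CPA}}[\mathcal{A}]$ for $\mathcal{G}_f$, except that the triple handed to the adversary is drawn from $\widetilde{\mathcal{F}}$ instead of $\mathcal{F}$; note that since both are $\PK_1$-distributions, the first component $pk_1$ still has the correct marginal distribution, so the challenger can still produce a genuine $\Enc_1(pk_1, m_b)$ and, using the sampled $bk$, the corresponding second component. Because $\widetilde{\mathcal{F}}$ and $\mathcal{F}$ are computationally indistinguishable and every other step of the experiment (choice of $b$, running $\mathcal{A}$, computing the $\Enc_1$'s and $f$) is PPT, a standard reduction shows $|\mathrm{Pr}[\mathrm{Expr}' = 1] - \mathrm{Pr}[\mathrm{Expr}^{\mathrm{IND-CPA}}[\mathcal{A}] = 1]|$ is negligible: a distinguisher between $\widetilde{\mathcal{F}}$ and $\mathcal{F}$ simulates the whole experiment using its input triple and outputs the experiment's verdict.

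Second, I would bound $\mathrm{Pr}[\mathrm{Expr}' = 1]$ by $\tfrac12 + \negl(\lambda)$ using a reduction to the IND-CPA security of $\mathcal{E}_1$. Here is where the hypothesis "polynomial-time constructible on fibers" is essential: given the challenge public key $pk_1$ from the $\mathcal{E}_1$ IND-CPA challenger, the reduction runs the fiber-sampling algorithm $A(1^\lambda, pk_1)$ to obtain a sample $(pk_2, bk) \leftarrow \widetilde{\mathcal{F}}|_{\PK_1 = pk_1}$, forwards $(pk_1, pk_2, bk)$ to $\mathcal{A}$, relays $\mathcal{A}$'s messages $m_0, m_1$ to its own challenger, receives $c \leftarrow \Enc_1(pk_1, m_b)$, forms the graph ciphertext $(c, f(bk, c))$, feeds it to $\mathcal{A}$, and outputs whatever $\mathcal{A}$ outputs. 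This perfectly simulates $\mathrm{Expr}'$, so the reduction's IND-CPA advantage against $\mathcal{E}_1$ equals $|\mathrm{Pr}[\mathrm{Expr}' = 1] - \tfrac12|$ up to the usual reformulation in Remark \ref{rem:prob}; since $\mathcal{E}_1$ is IND-CPA secure, this is negligible. Combining the two steps via the triangle inequality gives $\mathrm{Pr}[\mathrm{Expr}^{\mathrm{IND-CPA}}[\mathcal{A}] = 1] \leq \tfrac12 + \negl(\lambda)$, which by Remark \ref{rem:prob} is exactly IND-CPA security of $\mathcal{G}_f$, i.e. of the bridge.

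The main obstacle, and the reason the two hypotheses on $\widetilde{\mathcal{F}}$ are both needed, is that without "constructible on fibers" the reduction in the second step would not know how to produce a $(pk_2, bk)$ that is jointly correctly distributed \emph{given the externally supplied} $pk_1$; and without computational indistinguishability $\widetilde{\mathcal{F}} \approx \mathcal{F}$, simulating with $\widetilde{\mathcal{F}}$ would not faithfully reflect the real bridge. I would also take care to check the minor point that the fiber distribution of $\mathcal{F}$ itself over a given $pk_1$ is in general \emph{not} efficiently sampleable (that is precisely why one passes to $\widetilde{\mathcal{F}}$), so the decomposition into these two hybrids is not merely a convenience but genuinely where the argument acquires its force.
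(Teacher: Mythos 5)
Your proposal is correct and follows essentially the same route as the paper: a reduction to the IND-CPA security of $\mathcal{E}_1$ that uses the fiber-sampling algorithm of $\widetilde{\mathcal{F}}$ on the externally supplied challenge key $pk_1$, combined with a distinguisher argument (exploiting $\widetilde{\mathcal{F}}\approx\mathcal{F}$) to transfer the bound back to the real distribution $\mathcal{F}$. The paper merely packages the second step as two separate distinguishers $\mathcal{D}_0,\mathcal{D}_1$ (one per challenge message, summing three inequalities) and runs the experiment for $\mathcal{E}_1[\PK_{\mathcal{G}_f}]$ rather than for the pair ciphertext of $\mathcal{G}_f$, which spares it from simulating the second component $f(bk,\cdot)$ at all; these differences are inessential.
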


\begin{proof}
To show that the bridge ${\bf B}_{\iota, f}$ is IND-CPA secure it is enough to show that for any attacker $\mathcal{A}$ on $\mathcal{E}_1[\PK_{\mathcal{G}_f}]$ the probability of answering correctly is bounded above by $\frac{1}{2} + \mathrm{negl}(\lambda)$ (see Remark \ref{rem:prob}). 
This probability is given by

$$\mathrm{Pr}[\mathrm{Expr}[\mathcal{A}]=1] = \frac{1}{2} \mathrm{Pr}[\mathcal{A}(\mathcal{F}, \mathrm{Enc}_1(pk_1, 0))=0] + \frac{1}{2} \mathrm{Pr}[\mathcal{A}(\mathcal{F}, \mathrm{Enc}_1(pk_1, 1))=1].$$

\noindent If $\mathcal{A}$ is any attacker on $\mathcal{E}_1[\PK_{\mathcal{G}_f}]$, then we construct an attacker $\mathcal{A}_1$ on $\mathcal E_1$ as follows: $\mathcal{A}_1$ receives the pair $(pk_1, c \leftarrow \mathrm{Enc}_1(pk_1, b))$ from the challenger and uses the sampling algorithm of $\widetilde{\mathcal{F}}$ to produce a triple $(pk_1, \alpha, \beta)$. The attacker gives $(pk_1, \alpha, \beta, c)$ to $\mathcal{A}$ and then outputs the bit received from it. Since $\mathcal E_1$ is IND-CPA secure we have

\begin{equation*}
    \mathrm{Pr}[\mathrm{Expr}[\mathcal{A}_1]=1] \leq \frac{1}{2} + \mathrm{negl}_1(\lambda),
\end{equation*}

\noindent for some negligible function $\mathrm{negl}_1(\lambda)$. We get that

\begin{align} 
\begin{split}
\label{eq1}
    \mathrm{Pr}[\mathrm{Expr}[\mathcal{A}_1]=1] &= \frac{1}{2} \mathrm{Pr}[\mathcal{A}(\widetilde{\mathcal{F}}, \mathrm{Enc}_1(pk_1, 0))=0] + \frac{1}{2} \mathrm{Pr}[\mathcal{A}(\widetilde{\mathcal{F}}, \mathrm{Enc}_1(pk_1, 1))=1] \\ &\leq \frac{1}{2} + \mathrm{negl}_1(\lambda).
\end{split}
\end{align}

We construct now an IND-CPA distinguisher $\mathcal{D}_0$ between $\mathcal{F}$ and $\widetilde{\mathcal{F}}$ as follows. The distinguisher receives a triple $(pk_1, x, y)$ from the challenger and uses $pk_1$ to compute $c \leftarrow \mathrm{Enc}_1(pk_1, 0)$, gives $(pk_1, x, y, c)$ to $\mathcal{A}$ and outputs the bit received from $\mathcal{A}$. We understand that $\mathrm{Expr}[\mathcal{D}_0]$ identifies $\mathcal{F}$ if it outputs $0$ and $\widetilde{\mathcal{F}}$, if it outputs $1$. Then we have:

\begin{align}
\label{eq2}
\begin{split}
    \mathrm{Pr}[\mathrm{Expr}[\mathcal{D}_0]=1] &= \frac{1}{2} \mathrm{Pr}[\mathcal{A}(\mathcal{F}, \mathrm{Enc}_1(pk_1, 0))=0] + \frac{1}{2} \mathrm{Pr}[\mathcal{A}(\widetilde{\mathcal{F}}, \mathrm{Enc}_1(pk_1, 0))=1] \\ &\leq \frac{1}{2} + \mathrm{negl}_2(\lambda).
\end{split}
\end{align}

Similarly, one constructs $\mathcal{D}_1$, a distinguisher between $\widetilde{\mathcal{F}}$ and $\mathcal{F}$, similarly as above, but now $c\leftarrow \mathrm{Enc}_1(pk_1, 1)$. This time, $\mathrm{Expr}[\mathcal{D}_1]$ outputs $0$ if the distinguisher identifies $\widetilde{\mathcal{F}}$ and outputs $1$ if the distinguisher identifies $\mathcal{F}$. Thus, we get:

\begin{align}
\label{eq3}
\begin{split}
    \mathrm{Pr}[\mathrm{Expr}[\mathcal{D}_1]=1] &= \frac{1}{2} \mathrm{Pr}[\mathcal{A}(\mathcal{F}, \mathrm{Enc}_1(pk_1, 1))=1] + \frac{1}{2} \mathrm{Pr}[\mathcal{A}(\widetilde{\mathcal{F}}, \mathrm{Enc}_1(pk_1, 1))=0] \\ &\leq \frac{1}{2} + \mathrm{negl}_3(\lambda).
\end{split}
\end{align}

Adding the inequalities \eqref{eq1}, \eqref{eq2}, \eqref{eq3} and using the equalities:

$$
\mathrm{Pr}[\mathcal{A}(\widetilde{\mathcal{F}}, \mathrm{Enc}_1(pk_1, 0))=0] + \mathrm{Pr}[\mathcal{A}(\widetilde{\mathcal{F}}, \mathrm{Enc}_1(pk_1, 0))=1] = 1
$$

$$
\mathrm{Pr}[\mathcal{A}(\widetilde{\mathcal{F}}, \mathrm{Enc}_1(pk_1, 1))=0] + \mathrm{Pr}[\mathcal{A}(\widetilde{\mathcal{F}}, \mathrm{Enc}_1(pk_1, 1))=1] = 1
$$
we obtain
$$ \frac{1}{2} \mathrm{Pr}[\mathcal{A}(\mathcal{F}, \mathrm{Enc}_1(pk_1, 0))=0] + \frac{1}{2} \mathrm{Pr}[\mathcal{A}(\mathcal{F}, \mathrm{Enc}_1(pk_1, 1))=1]   \leq \frac{1}{2} + \mathrm{negl}(\lambda),$$
which is exactly what we wanted to prove.
\end{proof}

Now, we are able to state and prove our main result.

\begin{theorem}\label{thm:main}
Suppose $f: \mathcal{E}_1 \rightarrow \mathcal{E}_2$, $g: \mathcal{E}_2 \rightarrow \mathcal{E}_3$ are composable bridges such that $f$ is IND-CPA secure and there exists a polynomial time constructible on fibers ensemble of $\PK_f-$distributions which is computational indistinguishable from $\PK_{g \circ f}$, then $g\circ f$ is IND-CPA secure.
\end{theorem}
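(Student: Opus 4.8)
The plan is to reduce this statement to Theorem \ref{thm:1} applied to the composite bridge $g \circ f : \mathcal{E}_1 \to \mathcal{E}_3$. Recall from Definition \ref{def:compbridges} that the bridge key generation algorithm of $g \circ f$ first runs the key generation of $f$ (producing $sk_1, pk_1, sk_2, pk_2, bk_f$) and then runs only the second and third stages of the key generation of $g$ (producing $sk_3, pk_3, bk_g$), outputting the bridge key $bk_{g\circ f} = (bk_f, pk_2, bk_g)$. Consequently the ensemble $\mathcal{F}_{g \circ f}$ of triples $(pk_1, pk_3, bk_{g\circ f}) = (pk_1, pk_3, bk_f, pk_2, bk_g)$ is an ensemble of $\PK_1$-distributions, and this is exactly the ensemble $\PK_{g\circ f}$ appearing in the hypothesis. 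Theorem \ref{thm:1} tells us that $g\circ f$ is IND-CPA secure provided that (i) $\mathcal{E}_1$ is IND-CPA secure, and (ii) there exists a polynomial-time constructible on fibers ensemble of $\PK_1$-distributions computationally indistinguishable from $\PK_{g \circ f}$.

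The second hypothesis of our theorem supplies (ii), but with $\PK_f$-distributions rather than $\PK_1$-distributions; and the first hypothesis supplies only that $f$ is a secure bridge, not that $\mathcal{E}_1$ is secure. So the two steps of the proof are: first, observe that the security of the bridge $f$ — i.e. the security of $\mathcal{E}_1[pk_2, bk_f] = \mathcal{E}_1[\PK_f]$ — immediately implies the IND-CPA security of $\mathcal{E}_1$ itself (any attacker on $\mathcal{E}_1$ can be upgraded to an attacker on $\mathcal{E}_1[\PK_f]$ by simply ignoring the extra public data), giving (i). Second, unwind the definition of $\PK_f$: by \cite[Theorem 1]{BLPT23} and the construction of $\mathcal{G}_f$, an ensemble of $\PK_f$-distributions is precisely an ensemble fibered over $\mathcal{F} = \{(pk_1, pk_2, bk_f)\}$, which is in turn fibered over $\PK_1$; hence a polynomial-time constructible on fibers ensemble over $\PK_f$ that is indistinguishable from $\PK_{g\circ f}$ can be composed with (a constructible-on-fibers model of) $\mathcal{F}$ over $\PK_1$ to produce a polynomial-time constructible on fibers ensemble of $\PK_1$-distributions indistinguishable from $\PK_{g\circ f}$. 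Here one uses that $f$ being a secure bridge forces $\mathcal{F}$ itself (the honest distribution of $(pk_1,pk_2,bk_f)$) to admit such a constructible-on-fibers surrogate — this is exactly what Theorem \ref{thm:1} applied to $f$ gives, run in reverse, or more directly it follows because security of $f$ already provides the needed surrogate over $\PK_1$. With both (i) and (ii) in hand, Theorem \ref{thm:1} applied to $g \circ f$ finishes the proof.

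The main obstacle I anticipate is the bookkeeping in step two: carefully matching the fiber structures $\PK_{g\circ f} \to \PK_f \to \mathcal{F} \to \PK_1$ and checking that "polynomial-time constructible on fibers" is transitive along these projections, so that the surrogate over $\PK_f$ together with a surrogate for $\mathcal{F}$ over $\PK_1$ genuinely glue to a surrogate over $\PK_1$, and that computational indistinguishability is preserved under this gluing (a standard hybrid argument: swap the $\PK_f$-layer first, then the $\mathcal{F}$-layer). The subtlety is that the hypothesis only directly hands us indistinguishability relative to the \emph{true} $\PK_f$, so one must invoke the security of $f$ to replace that true distribution by its constructible surrogate before descending to $\PK_1$; I would make this precise by a two-step hybrid and a triangle inequality on distinguishing advantages, each term negligible by one of the two hypotheses.
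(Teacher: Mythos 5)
There is a genuine gap in your step two. You need, in order to invoke Theorem \ref{thm:1} for $g\circ f$ over the base scheme $\mathcal{E}_1$, a polynomial-time constructible on fibers ensemble of $\PK_1$-distributions indistinguishable from $\PK_{g\circ f}$, and to build it you assert that the security of the bridge $f$ ``forces $\mathcal{F}$ itself to admit a constructible-on-fibers surrogate over $\PK_1$,'' citing Theorem \ref{thm:1} ``run in reverse.'' But Theorem \ref{thm:1} is only a sufficient condition: it says that the existence of such a surrogate implies security of the bridge, not conversely. Security of $f$ is an indistinguishability statement about $\mathcal{E}_1[\PK_f]$ and does not yield an efficient simulator for the triple $(pk_1,pk_2,bk_f)$ given $pk_1$ alone; the auxiliary data could well be useless to an attacker yet not efficiently sampleable from $pk_1$. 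Since this surrogate for $\mathcal{F}$ over $\PK_1$ is exactly the ingredient your gluing/hybrid needs, the descent from $\PK_f$-distributions to $\PK_1$-distributions does not go through, and the rest of the plan collapses.

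The paper's proof avoids this descent entirely. It introduces the auxiliary bridge $\widehat{g}:\mathcal{G}_f\to\mathcal{E}_3$ defined by $\widehat{g}(bk_g,(a,b)):=g(bk_g,b)$, where $\mathcal{G}_f$ is the graph scheme of $f$. The point is that the public key of $\mathcal{G}_f$ is precisely the triple $(pk_1,pk_2,bk_f)$, i.e.\ $\PK_{\mathcal{G}_f}=\PK_f$, so the hypothesis of the theorem --- a constructible-on-fibers ensemble of $\PK_f$-distributions indistinguishable from $\PK_{g\circ f}$ --- is exactly the hypothesis Theorem \ref{thm:1} requires for the bridge $\widehat{g}$, with base scheme $\mathcal{G}_f$; and the other hypothesis of Theorem \ref{thm:1}, namely the security of the base scheme, is exactly the security of the bridge $f$. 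One concludes that $\mathcal{G}_{\widehat{g}}$ is IND-CPA secure, and then a trivial reduction (an adversary against $\mathcal{G}_{g\circ f}$ yields one against $\mathcal{G}_{\widehat{g}}$ by discarding the middle component of the ciphertext triple) finishes the argument. If you want to repair your write-up, replacing the base scheme $\mathcal{E}_1$ by $\mathcal{G}_f$ in your application of Theorem \ref{thm:1} is the missing idea.
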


\begin{proof}
Notice that the composable pair of bridges $(f, g)$ gives rise to a bridge $\widehat{g}: \mathcal{G}_f \rightarrow \mathcal{E}_3$, by the formula:

\begin{equation*}
    \widehat{g}(bk_{\hat{g}}, (a, b)) := g(bk_g, b),
\end{equation*}

\noindent where $bk_{\hat{g}} = bk_g$. Since $f$ is secure, it follows that the scheme $\mathcal{G}_f$ is secure. By our assumptions, an immediate application of Theorem \ref{thm:1} shows that $\widehat{g}$ is IND-CPA secure, which means that the scheme $\mathcal{G}_{\widehat{g}}$ is IND-CPA secure. Notice that an encryption of a message $m$ in $\mathcal{G}_{\widehat{g}}$ is in fact a triple of the form $(a, f(bk_f, b), g(bk_g,f(bk_f, c)))$, where $a, b, c$ are encryptions of $m$ in $\mathcal{E}_1$, so that any adversary $\mathcal{A}$ on $\mathcal{G}_{g \circ f}$ gives rise to an adversary $\mathcal{A}'$ on $\mathcal{G}_{\widehat{g}}$. Indeed, notice first that these encryption schemes have the same public key. On the other hand if $(a, f(bk_f, b), g(bk_g,f(bk_f, c)))$ is the triple received by $\mathcal{A}'$ from the challenger (together with the public key), then $\mathcal{A}'$ gives the public key and the pair $(a, g(bk_g,f(bk_f, c)))$ to $\mathcal{A}$ and outputs the bit received from it. It is easy to see that, since

\begin{equation*}
\mathrm{Pr}[\mathrm{Expr}^{\mathrm{IND-CPA}}[\mathcal{A}'](\lambda)=1] = \mathrm{Pr}[\mathrm{Expr}^{\mathrm{IND-CPA}}[\mathcal{A}](\lambda)=1],
\end{equation*}

\noindent if the attacker $\mathcal{A}$ breaks the IND-CPA security of $\mathcal{G}_{g \circ f}$, then $\mathcal{A}'$ also breaks the security of $\mathcal{G}_{\widehat{g}}$, and this is a contradiction.
\end{proof}

\section{Gentry type bridges} \label{sec:Gentry}

We recall the construction of Gentry type bridges from \cite{BLPT23}. Briefly, the \textit{Recrypt} algorithm, used in the bootsrapping procedure that transforms a somewhat homomorphic encryption scheme into a fully homomorphic encryption scheme (see \cite{Ge10}), can be adapted to our situation in order to give a general recipe for the construction of a bridge.

Let us consider an encryption scheme 

$$\mathcal{E} =(\KeyGen_{\mathcal E}, \Enc_{\mathcal E}, \Dec_{\mathcal E})$$ 

\noindent and a homomorphic encryption scheme

$$\mathcal{H} =(\KeyGen_{\mathcal H}, \Enc_{\mathcal H}, \Dec_{\mathcal H}, \Eval_{\mathcal H}).$$

\noindent We shall also denote by $\mathcal{M}_{\mathcal E}, \mathcal{C}_{\mathcal E}$ and $\mathcal{M}_{\mathcal H}, \mathcal{C}_{\mathcal H}$, the plaintext and ciphertext spaces of $\mathcal E$ and $\mathcal H$, respectively. 

Fix once and for all, a function $\iota : \mathcal M_{\mathcal E} \to \mathcal M_{\mathcal H}$ that is computable by a deterministic polynomial time algorithm. 

At a high level, the bridge key $bk_f$ of a Gentry-type bridge $f: \mathcal E \to \mathcal H$ consists of the bit representation of the secret key of $\mathcal E$ encrypted under the public key of $\mathcal H$. Moreover, the bridge algorithm is the homomorphic evaluation of the circuit $\iota \circ \Dec_{\mathcal E}: \mathcal C_{\mathcal E}\to \mathcal{M}_{\mathcal H}$. 

We now describe with details the algorithms of the bridge. Let us start with the key generation algorithm.

\noindent $\bullet$ In the first stage, the key generation algorithm of the bridge runs $\KeyGen_{\mathcal E}(1^{\lambda})$ to obtain a pair of keys $(sk_{\mathcal E},pk_{\mathcal E})$. The second stage of the algorithm runs independently of the first one, and it just makes use of $\KeyGen_{\mathcal H}(1^\lambda)$ to obtain $(sk_{\mathcal H}, pk_{\mathcal H})$. 

The final stage of the algorithm takes as input $(sk_{\mathcal E}, pk_{\mathcal E}, sk_{\mathcal H}, pk_{\mathcal H})$ constructed as above, and creates $bk_f$ as the vector of encryptions of the bit representation of $sk_{\mathcal E}$ under $pk_{\mathcal H}$ (the details of this are below). 

\noindent $\bullet$ The PPT algorithm $f$ mentioned in the third part of Definition \ref{bridge} is the homomorphic evaluation of the algorithm $\iota \circ \Dec_{\mathcal E}$. To rigorously  perform it, we need to realise $\iota\circ\Dec_{\mathcal E}$ as a map $\mathcal{M}_{\mathcal{H}}^{\ell} \rightarrow \mathcal{M}_{\mathcal{H}}$, and for this we use the ring structure on $\cM_{\mathcal H}$ (we shall assume that such a ring structure exists; one can avoid this assumption with little work, but since all the known FHE schemes have this property we chose to work with it). 
Suppose that the ciphertext space $\sC_{\mathcal E}$ has a representation as a subset of $\{ 0, 1\}^n$ and that the set of secret keys is a subset of $\{ 0, 1\}^e$, so that $\iota \circ \Dec_{\mathcal E}: \{ 0, 1\}^e \times \{ 0, 1\}^n \rightarrow \mathcal{M}_{\mathcal H}$. Now, we construct the map $\widetilde{\iota\circ\Dec}_{\mathcal E}: \cM_{\mathcal{H}}^e \times \cM_{\mathcal{H}}^n \rightarrow \cM_{\mathcal{H}}$ as follows. Viewing $\cM_{\mathcal{H}}$ as a subset of $\{0, 1\}^m$, we have that $\iota \circ \Dec_{\mathcal E}: \{ 0, 1\}^e \times \{ 0, 1\}^n \rightarrow \cM_{\mathcal{H}}$ is a vector $(g_1,...,g_m)$ of boolean circuits expressed using ${\rm XOR}$ and ${\rm AND}$ gates. Let $\tilde{g_i}: \cM_{\mathcal{H}}^e \times \cM_{\mathcal{H}}^n \rightarrow \cM_{\mathcal{H}}$ be the circuit obtained by replacing each ${\rm XOR}(x,y)$ gate by $x\oplus y:=2(x+y) - (x+y)^2$ and each ${\rm AND}(x,y)$ gate by $x\otimes y := x \cdot y$, where $+$ and $\cdot$ are the addition and multiplication in $\mathcal{M}_{\mathcal{H}}$. Notice that the subset of $\mathcal{M}_{\mathcal{H}}$ consisting of its zero element $0_{\mathcal H}$ and its unit $1_{\mathcal H}$ together with $\oplus$ and $\otimes$ is a realisation of the field with two elements inside $\mathcal{M}_{\mathcal H}$. In other words, if $c=(c[1],...,c[n]) \in \sC_{\mathcal E}$ and $sk_{\mathcal E}=(sk[1],...,sk[e])$ is the secret key, then $\tilde{g_i}(sk[1]_{\mathcal H},...,sk[e]_{\mathcal H}, c[1]_{\mathcal H},...,c[n]_{\mathcal H}) = m_{\mathcal H}$ if $g_i(sk[1],...,sk[e],c[1],...,c[n]) = m$ for all $i$, where $m \in \{0, 1\}$.
For an element $x \in \cM_{\mathcal H}$, we let $[x=1_{\mathcal H}]$ be the equality test, which returns $1$ if $x=1_{\mathcal H}$ and $0$ otherwise. 
Finally, $\widetilde{\iota \circ \Dec}_{\mathcal E}:\mathcal{M}_{\mathcal{H}}^e \times \mathcal{M}_{\mathcal{H}}^n \to \{0,1\}^m$ is defined by: $$\left(\left[ \tilde{g_i}(y_1,..., y_e, x_1,..., x_n) = {1_\mathcal H} \right]\right)_{i = \overline{1, m}}.$$ 
One can verify immediately that 
$$\widetilde{\iota \circ \Dec}_{\mathcal E}(sk[1]_{\mathcal H},...,sk[e]_{\mathcal H}, c[1]_{\mathcal H},...,c[n]_{\mathcal H})= \iota\circ {\rm Dec}_{\mathcal E}(sk_{\mathcal E},c).$$

\noindent Now we are ready to define the bridge algorithm $f$. Given a ciphertext $c \in \mathscr{C}_{\mathcal E}$, the algorithm $f$ first encrypts the bits of $c$ (viewed as elements of $\cM_{\mathcal H}$) under $pk_{\mathcal H}$ and retains these encryptions in a vector $\tilde{c}$. The bridge key $bk_f$ is obtained by encrypting the bits $sk[i]_{\mathcal E}$ under $pk_{\mathcal H}$, for $i \in \overline{1, e}$. Then, the algorithm $f$ outputs $\mathrm{Eval}_{\mathcal H}(evk_{\mathcal H},\widetilde{\iota \circ \Dec}_{\mathcal E}, (bk_f, \tilde c)).$

\begin{remark}
Let us notice that in the above construction, one does not necessarily need to encrypt the bit representation of the ciphertext $c$ under $pk_{\mathcal H}$. The whole construction works if one homomorphically evaluates the circuits $\widetilde{\iota \circ \Dec}_{\mathcal E}(\cdot, c):\mathcal{M}_{\mathcal{H}}^e \to \mathcal{M}_{\mathcal{H}}$, for each fixed ciphertext $c$. While this variant is often more efficient, one needs to compute the circuit   $\widetilde{\iota \circ \Dec}_{\mathcal E}(\cdot, c)$ every time the bridge is applied. Also, most of the time, there exist "trivial" encryptions of $0_{\mathcal H}$ and $1_{\mathcal H}$, so that the two bridges become identical.
\end{remark}

\begin{theorem}\label{thm:Gentry_complete}
 Any Gentry type bridge is complete.   
\end{theorem}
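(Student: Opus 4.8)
The plan is to unwind the definition of completeness (Definition~\ref{complete:bridge}) for a Gentry type bridge $f \colon \mathcal E \to \mathcal H$ and reduce it directly to the \emph{Correct Evaluation} property of the homomorphic scheme $\mathcal H$, combined with the identity
$$\widetilde{\iota \circ \Dec}_{\mathcal E}\big(sk[1]_{\mathcal H},\dots, sk[e]_{\mathcal H}, c[1]_{\mathcal H},\dots, c[n]_{\mathcal H}\big) = \iota\circ\Dec_{\mathcal E}(sk_{\mathcal E}, c)$$
that was verified just before the statement.

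First I would fix a triple $(sk_{\mathcal E}, sk_{\mathcal H}, bk_f)$ produced by the bridge key generation algorithm and an \emph{arbitrary} ciphertext $c = (c[1],\dots,c[n]) \in \mathcal C_{\mathcal E}$ in its $\{0,1\}^n$ representation — crucially \emph{not} assumed to be a fresh encryption. By the construction recalled above, $f(bk_f, c) = \Eval_{\mathcal H}\big(evk_{\mathcal H}, \widetilde{\iota\circ\Dec}_{\mathcal E}, (bk_f, \tilde c)\big)$, where $\tilde c$ is the tuple of $\Enc_{\mathcal H}(pk_{\mathcal H}, c[j]_{\mathcal H})$ and $bk_f$ is the tuple of $\Enc_{\mathcal H}(pk_{\mathcal H}, sk[i]_{\mathcal H})$. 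The key observation is that the whole input $(bk_f, \tilde c)$ to $\Eval_{\mathcal H}$ consists of \emph{fresh} $\mathcal H$-encryptions of the bits $sk[i]_{\mathcal E}$ and $c[j]$, viewed as elements of $\mathcal M_{\mathcal H}$; hence the Correct Evaluation hypothesis of $\mathcal H$ (which quantifies over fresh encryptions of arbitrary plaintexts) applies verbatim. It then gives, with overwhelming probability over the internal randomness of $\Enc_{\mathcal H}$ and $\Eval_{\mathcal H}$,
$$\Dec_{\mathcal H}\big(sk_{\mathcal H}, f(bk_f, c)\big) = \widetilde{\iota\circ\Dec}_{\mathcal E}\big(sk[1]_{\mathcal H},\dots, sk[e]_{\mathcal H}, c[1]_{\mathcal H},\dots, c[n]_{\mathcal H}\big),$$
and combining this with the displayed identity yields $\Dec_{\mathcal H}(sk_{\mathcal H}, f(bk_f, c)) = \iota(\Dec_{\mathcal E}(sk_{\mathcal E}, c))$, which is exactly the completeness condition since $c$ was arbitrary.

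I do not expect a genuine obstacle here: the statement follows purely formally from the two displayed equalities. The only point deserving care is the applicability of Correct Evaluation — namely that the tuples fed to $\Eval_{\mathcal H}$ really are fresh encryptions of the intended plaintexts, and that $\widetilde{\iota\circ\Dec}_{\mathcal E}$ is a legitimate circuit $\mathcal M_{\mathcal H}^{e+n} \to \mathcal M_{\mathcal H}$ lying in the evaluable class. When $\mathcal H$ is an FHE scheme this is automatic; for a merely somewhat-homomorphic $\mathcal H$ one should note that $\widetilde{\iota\circ\Dec}_{\mathcal E} \in \mathcal L$ is precisely the standing assumption behind the Gentry-type construction. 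Once that bookkeeping is made explicit, nothing else is needed.
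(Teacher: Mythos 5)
Your proposal is correct and follows essentially the same route as the paper's own proof: both observe that the tuple $(bk_f,\tilde c)$ fed to $\Eval_{\mathcal H}$ consists of fresh $\mathcal H$-encryptions even when $c$ is arbitrary, invoke Correct Evaluation, and conclude via the identity $\widetilde{\iota\circ\Dec}_{\mathcal E}(sk[1]_{\mathcal H},\dots,c[n]_{\mathcal H})=\iota\circ\Dec_{\mathcal E}(sk_{\mathcal E},c)$. Your extra remark about $\widetilde{\iota\circ\Dec}_{\mathcal E}$ needing to lie in the evaluable class $\mathcal L$ is a sensible piece of bookkeeping that the paper leaves implicit.
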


\begin{proof}
With the above notations, if $c$ is \emph{any} ciphertext in $\mathcal{C}_{\mathcal E}$, then $\tilde{c}$ consists of an $n$-dimensional vector of "fresh" encryptions of the scheme $\mathcal{H}$. Since $bk_f$ is an $e$-dimensional vector consisting of fresh encryptions, we have:

\begin{eqnarray*}
 \Dec_{\mathcal H} (sk_{\mathcal H}, f(bk_f,c)) & = & \Dec_{\mathcal H}  \left( sk_{\mathcal H}, \Eval_{\mathcal H}(evk_{\mathcal H},\widetilde{\iota \circ \Dec}_{\mathcal E}, (bk_f, \tilde c)) \right)  \\
& = &  (\iota \circ \Dec_{\mathcal E}) \left(\Dec_{\mathcal H}(sk_{\mathcal H}, bk_f), \Dec_{\mathcal H}(sk_{\mathcal H}, \tilde{c}) \right)\\
& = & \iota \left( {\rm Dec}_{\mathcal E} (sk_{\mathcal E}, c) \right)
\end{eqnarray*}
which shows that third condition in the definition of a bridge is satisfied for any ciphertext $c \in \mathscr{C}_{\mathcal E}$, i.e. $f$ is complete.
\end{proof}

An immediate consequence of this theorem and Proposition \ref{PropComp} is the following:

\begin{corollary} \label{cor:composable}
If $f: \mathcal{E}_1 \rightarrow  \mathcal{E}_2$ is a bridge and $g: \mathcal{E}_2 \rightarrow \mathcal{H}$ is a Gentry type bridge, then the two bridges are composable. 
\end{corollary}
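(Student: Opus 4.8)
The plan is to observe that this is an immediate corollary of the two results just established, so the proof amounts to checking that the hypotheses of Proposition \ref{PropComp} are met. First I would note that, by hypothesis, $g : \mathcal E_2 \to \mathcal H$ is a Gentry type bridge, so Theorem \ref{thm:Gentry_complete} applies and tells us that $g$ is complete. Next I would check that the composition $g \circ f$ is \emph{defined} in the sense of Definition \ref{def:compbridges}: the target scheme of $f$ and the source scheme of $g$ coincide (both equal $\mathcal E_2$), the composed plaintext map $\iota_g \circ \iota_f : \mathcal M_1 \to \mathcal M_{\mathcal H}$ makes sense and is deterministic polynomial time, the composed bridge key generation chains the key generation of $f$ with the second and third stages of that of $g$, and the composed evaluation algorithm is $c_1 \mapsto g(bk_g, f(bk_f, c_1))$.

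With these two points in hand, I would invoke Proposition \ref{PropComp} with $\mathcal E_3 := \mathcal H$: since $g$ is complete, the proposition yields that $g \circ f : \mathcal E_1 \to \mathcal H$ is again a bridge. By Definition \ref{def:bridge_comp}, the statement that the pair $(f,g)$ is composable means exactly that $g \circ f$ is defined and is a bridge, so the corollary follows at once.

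There is essentially no obstacle here; the corollary is a direct combination of Theorem \ref{thm:Gentry_complete} and Proposition \ref{PropComp}. The only bookkeeping point worth spelling out is that the secret key $sk_2$ produced in the second stage of the key generation of $f$ is precisely the one fed into the (second and third stages of the) key generation of $g$, so that the completeness identity $\mathrm{Dec}_{\mathcal H}(sk_{\mathcal H}, g(bk_g, c_2)) = \iota_g(\mathrm{Dec}_2(sk_2, c_2))$ can be applied with the matching triple $(sk_2, sk_{\mathcal H}, bk_g)$ and with $c_2 := f(bk_f, \mathrm{Enc}_1(pk_1, m))$; after that, the correctness computation is verbatim the one already carried out in the proof of Proposition \ref{PropComp}.
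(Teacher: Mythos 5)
Your proposal is correct and matches the paper exactly: the corollary is stated there as an immediate consequence of Theorem \ref{thm:Gentry_complete} (Gentry type bridges are complete) together with Proposition \ref{PropComp}, which is precisely the combination you invoke. The extra bookkeeping remark about the matching secret key $sk_2$ is a reasonable elaboration but adds nothing beyond what the paper's argument already contains.
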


Now, we can state and prove the main result of this section.

\begin{theorem} \label{thm:composition} 
If $f: \mathcal{E}_1 \rightarrow  \mathcal{E}_2$ is a secure bridge, $g: \mathcal{E}_2 \rightarrow \mathcal{H}$  is a Gentry type secure bridge, and $\mathcal{H}$ is a secure FHE scheme, then $g \circ f$ is a secure bridge.
\end{theorem}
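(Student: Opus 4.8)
The plan is to reduce the statement to Theorem~\ref{thm:main}. Since $g$ is a Gentry type bridge, Corollary~\ref{cor:composable} tells us that $f$ and $g$ are composable, so the composability hypothesis of Theorem~\ref{thm:main} holds, and $f$ is secure by assumption. It therefore suffices to produce a polynomial-time constructible on fibers ensemble of $\PK_f$-distributions that is computationally indistinguishable from $\PK_{g\circ f}$.

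First I would unwind the structure of $\PK_{g\circ f}$. By Definition~\ref{def:compbridges}, a sample is a tuple built from $pk_1$, the freshly generated $\mathcal H$-key $pk_{\mathcal H}$ (produced in the second stage of $g$'s key generation, independently of everything coming from $f$), and $bk_{g\circ f}=(bk_f,pk_2,bk_g)$; and because $g:\mathcal E_2\to\mathcal H$ is Gentry type, $bk_g$ is precisely the vector of $e$ fresh $\mathcal H$-encryptions under $pk_{\mathcal H}$ of the bits $sk_2[1],\dots,sk_2[e]$ of the secret key of $\mathcal E_2$, where $e=e(\lambda)$ is the (fixed, polynomial) bit length of $sk_2$. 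Now define $\widetilde{\mathcal F}$ by the following sampler on fibers: given $(pk_1,pk_2,bk_f)\leftarrow\PK_f$, run $\KeyGen_{\mathcal H}(1^\lambda)$ to get $(sk_{\mathcal H},pk_{\mathcal H})$, and output the same tuple as above but with $bk_g$ replaced by a vector $\widetilde{bk}_g$ of $e$ fresh encryptions of $0_{\mathcal H}$ under $pk_{\mathcal H}$. Since $e$ is a fixed polynomial and $0_{\mathcal H}$ is a public constant, this sampler is PPT and --- crucially --- only reads $(pk_1,pk_2,bk_f)$, never $sk_2$; hence $\widetilde{\mathcal F}$ is a polynomial-time constructible on fibers ensemble of $\PK_f$-distributions, as required.

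The main (and essentially routine) step is then to verify that $\widetilde{\mathcal F}$ and $\PK_{g\circ f}$ are computationally indistinguishable. The two ensembles differ only in whether the last block is a vector of $\mathcal H$-encryptions of the bit-string $sk_2$ or of the zero string, both under a key $pk_{\mathcal H}$ generated freshly and independently. By a standard hybrid argument over the $e$ coordinates, a PPT distinguisher $D$ with non-negligible advantage yields a PPT adversary $\mathcal B$ breaking the IND-CPA security of $\mathcal H$: on input $pk_{\mathcal H}$, $\mathcal B$ itself runs the key generation of the bridge $f$ (thereby obtaining $sk_2$, hence each bit $sk_2[j]$), picks a uniform index $j\in\{1,\dots,e\}$, submits the challenge pair $(0_{\mathcal H},sk_2[j])$, receives the challenge ciphertext $c^{\ast}$, and assembles the tuple whose last block has encryptions of $sk_2[1],\dots,sk_2[j-1]$ in its first $j-1$ slots, $c^{\ast}$ in slot $j$, and encryptions of $0_{\mathcal H}$ in the remaining slots, together with the honestly generated $pk_1,pk_2,bk_f,pk_{\mathcal H}$; finally $\mathcal B$ runs $D$ on this tuple and echoes its output. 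As $\mathcal H$ is IND-CPA secure, $\mathcal B$'s advantage is negligible, so $D$'s advantage (at most $e(\lambda)$ times $\mathcal B$'s) is negligible too. Note that only the IND-CPA security of $\mathcal H$ is used here --- its homomorphic structure entered only in making the Gentry type bridge $g$ well defined. Having produced $\widetilde{\mathcal F}$, Theorem~\ref{thm:main} applies directly and gives that $g\circ f$ is IND-CPA secure, i.e.\ a secure bridge.
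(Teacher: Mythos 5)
Your proposal is correct and follows essentially the same route as the paper: reduce to Theorem~\ref{thm:main} via Corollary~\ref{cor:composable}, replace the Gentry-type bridge key $bk_g$ by fresh encryptions of $0_{\mathcal H}$ to obtain the polynomial-time constructible on fibers ensemble $\widetilde{\mathcal F}$, and conclude indistinguishability by a hybrid argument over the $e$ coordinates using the IND-CPA security of $\mathcal H$ together with the independence of $pk_{\mathcal H}$ from $\PK_f$. The only difference is that you spell out the hybrid reduction explicitly, whereas the paper delegates it to Proposition~2 of \cite{BLPT23}.
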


\begin{proof}
The security of the bridge will follow from Theorem \ref{thm:main}, after showing that the ensemble of distributions $\PK_{g \circ f}$ satisfies the required indistinguishably condition with respect to a certain polynomial-time constructible on fibers $\PK_f$-distribution.
Let us recall that $\PK_f$ consists of the tuple $(pk_1,pk_2, bk_f)$ and that  $\PK_{g \circ f}$ consists of $(pk_1,pk_2,bk_f, pk_{\mathcal H}, bk_{g})$. As $g$ is of Gentry-type, $bk_g$ consists of a vector encrypting the bit representation of $sk_{2}$ under $pk_{\mathcal H}$. Let $\tilde{\mathcal F}$ be the ensemble of distributions $(pk_1, pk_2, bk_f, pk_H, \tilde{bk})$, obtained in the following way. Firstly, one samples $(pk_1,pk_2, bk_f)$ from the distribution $\PK_f$. Secondly, one uses $\mathrm{KeyGen}_{\mathcal H}$ to sample 
$pk_{\mathcal{H}}$ and then we let $\widetilde{bk}=(\widetilde{bk}[1],...,\widetilde{bk}[e])$ with $\widetilde{bk}[i] \leftarrow {\rm Enc}(pk_\sH, 0_\sH)$ for all $i\in \overline{1,e}$, where $e$ is the bit-length of $sk_2$. Note that $e$ can be considered public knowledge, as it only depends on the parameters of encryption in the scheme $\mathcal{E}_2$. We also note that since $g$ is a Gentry type bridge, the public key $pk_{\mathcal H}$ and any sample from the distribution $\PK_f$ are chosen independently from each other. Note that $\tilde{\mathcal F}$ is a polynomial-time constructible on fibers $\PK_{f}$-distribution. If the scheme $\mathcal H$ is IND-CPA secure, then one can prove by a hybrid argument, identical to the one in the proof of Proposition 2 of \cite{BLPT23}, that the $\PK_f$-distributions $\PK_{g \circ f}$ and $\tilde{\mathcal F}$ are computationally indistinguishable. 
\end{proof}

\begin{remark} \label{secure:Gentry}
Notice that the Gentry type bridge $g$ in the above theorem is already secure if the schemes $\mathcal{E}_2$ and $\mathcal{H}$ are secure (cf. Theorem 3 in \cite{BLPT23}).
\end{remark}

\section{Bridges from circuits and Micciancio's Theorem}\label{sec:BrigCirc}

In this section we explain how a bridge can be canonically associated to a pair consisting of an encryption scheme and a circuit that can be homomorphically evaluated. As a consequence, one can translate every discussion about homomorphic circuit evaluation and composition of circuits into the language of bridges developed here and in \cite{BLPT23}. Moreover, one can see Theorem \ref{thm:Gentry_complete} as a generalisation of the following theorem:

\begin{theorem}[Micciancio]\label{thm:Micciancio} Every circular secure FHE encryption scheme $\mathcal{H} = (\mathrm{KeyGen}, \Enc, \Dec, \Eval)$, can be transformed into a secure fully composable homomorphic encryption scheme $\mathcal{H}' = (\mathrm{KeyGen}', \Enc, \Dec, \Eval')$. 
\end{theorem}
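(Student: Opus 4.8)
The plan is to construct $\mathcal{H}'$ from a circular secure FHE $\mathcal{H}$ by mimicking the Gentry-type bridge construction, but now with source and target being the same scheme. Concretely, I would set $\mathrm{KeyGen}'$ to run $\mathrm{KeyGen}$ to obtain $(sk, pk, evk)$ and then append to the public/evaluation material a vector $\widetilde{bk}$ consisting of encryptions under $pk$ of the bit representation of $sk$ itself; the encryption algorithm $\Enc$ and decryption algorithm $\Dec$ stay unchanged. The new evaluation algorithm $\Eval'$, applied to a circuit $C$ and ciphertexts $c_1,\dots,c_\ell$, first runs the ordinary $\Eval(evk, C, c_1,\dots,c_\ell)$ to get some ciphertext $c^*$, then applies the Gentry-type bridge $f:\mathcal{H}\to\mathcal{H}$ built from $\iota = \mathrm{id}$ and $\Dec$, i.e. it homomorphically evaluates $\widetilde{\Dec}$ on input $(\widetilde{bk}, \widetilde{c^*})$ where $\widetilde{c^*}$ is the coordinate-wise encryption of the bits of $c^*$ under $pk$. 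In bridge language, $\Eval'$ is the composition of the "circuit bridge" $\mathcal{H}^{(\ell)}\to\mathcal{H}$ induced by $C$ with the self-Gentry-type bridge $f:\mathcal{H}\to\mathcal{H}$.

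The steps I would carry out are: (i) recall/define the circuit bridge $\mathcal{H}^{(\ell)}\to\mathcal{H}$ associated to a homomorphically evaluable circuit $C$ (this is forward-referenced to Section~\ref{sec:BrigCirc}); (ii) observe that $f$, being Gentry-type, is complete by Theorem~\ref{thm:Gentry_complete}, hence by Corollary~\ref{cor:composable} (or Proposition~\ref{PropComp}) the composition with any prior bridge is again a bridge — this is exactly what gives \emph{correctness of composition}, i.e. the "fully composable" property: after $\Eval'$ the output is a fresh-looking ciphertext whose decryption equals $C(m_1,\dots,m_\ell)$ regardless of the noise state of the inputs, and one can keep applying $\Eval'$; (iii) verify correctness of a single $\Eval'$ directly from completeness of $f$: $\Dec(sk, f(\widetilde{bk}, \widetilde{c^*})) = (\mathrm{id}\circ\Dec)(sk, c^*) = C(m_1,\dots,m_\ell)$; (iv) address security. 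For security, the point is that the only extra public material in $\mathcal{H}'$ compared to $\mathcal{H}$ is $\widetilde{bk} = (\Enc(pk, sk[1]),\dots,\Enc(pk,sk[e]))$, i.e. an encryption of the secret key under its own public key; indistinguishability of $(pk, \widetilde{bk})$ from $(pk, \Enc(pk,0^e))$ is precisely the definition of circular security, and then IND-CPA security of $\mathcal{H}'$ reduces to that of $\mathcal{H}$ by a hybrid argument of the same shape as in Theorem~\ref{thm:1} and Theorem~\ref{thm:composition} (replacing $\widetilde{bk}$ by encryptions of zero, then invoking IND-CPA of $\mathcal{H}$). This is the "additional condition on the KeyGen algorithm" alluded to in \S1.6: the self-referential bridge key forces the security reduction to pass through circular security rather than plain IND-CPA.

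The main obstacle, and the part deserving the most care, is making precise what "fully composable homomorphic encryption" means and checking that the construction satisfies it in full generality — not merely that one $\Eval'$ is correct, but that arbitrarily deep compositions $\Eval'(\dots,\Eval'(\dots),\dots)$ remain correct. This is where completeness of $f$ does the real work: because $f$ restores correctness for \emph{every} ciphertext in $\mathcal{C}_{\mathcal H}$ (Theorem~\ref{thm:Gentry_complete}), the output of $\Eval'$ is always a legitimate ciphertext of $\mathcal{H}$ decrypting to the right plaintext, so the hypotheses of the circuit bridge (and of $f$) are met on the next round; formally one would phrase this as: the induced self-map on $\mathcal{H}$ is a complete bridge from a fiber power $\mathcal{H}^{(\ell)}$ to $\mathcal{H}$, and complete bridges compose (Corollary after Proposition~\ref{PropComp}), so an induction on circuit-composition depth goes through. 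A secondary subtlety is the bookkeeping that $\mathcal{M}_{\mathcal H}$ carries a ring structure in which $\{0_{\mathcal H},1_{\mathcal H}\}$ realises $\mathbb{F}_2$, exactly as in the Gentry-type construction, so that $\widetilde{\Dec}$ is a well-defined homomorphically evaluable circuit; but this is routine given the machinery already set up in Section~\ref{sec:Gentry}.
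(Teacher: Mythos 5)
There is a genuine gap, and it sits exactly at the point the theorem is about. Your $\Eval'$ first runs the ordinary $\Eval(evk,C,c_1,\dots,c_\ell)$ to get $c^*$ and only then bootstraps via the Gentry-type bridge $f$. But the FHE correctness guarantee for $\Eval$ applies only to \emph{fresh} encryptions; when the $c_i$ are arbitrary ciphertexts (e.g.\ outputs of a previous $\Eval'$), the intermediate $c^*$ may already satisfy $\Dec(sk,c^*)\neq C(\Dec(sk,c_1),\dots,\Dec(sk,c_\ell))$ --- this is precisely the failure illustrated in Remark~\ref{rmk:2}. Since $f$ is complete, it faithfully preserves $\Dec(sk,c^*)$, i.e.\ it preserves the \emph{wrong} plaintext; bootstrapping after the fact cannot repair a value that was already corrupted. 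Your appeal to Proposition~\ref{PropComp} / Corollary~\ref{cor:composable} only yields that $f\circ\bridge_C$ is a \emph{bridge}, i.e.\ correct on fresh encryptions of $\cH^{(\ell)}$; the FcHE property of Definition~\ref{fche} requires the bridge associated to $\Eval'$ to be \emph{complete} (Remark~\ref{rem:fc_from_complete}), i.e.\ correct on \emph{all} of $\cC^{\ell}$, and the Corollary gives completeness of a composition only when \emph{both} factors are complete --- $\bridge_C$ is not, which is the whole problem. Relatedly, your induction step ``the output of $\Eval'$ is a fresh-looking ciphertext, so the hypotheses are met on the next round'' does not hold: the output of a homomorphic evaluation lies in $\cC$ but need not lie in the image of $\Enc(pk,\cdot)$, so the next application of the plain $\Eval$ is again outside its correctness guarantee.

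The construction the paper (following Micciancio) uses reverses the order: the input ciphertexts $c_1,\dots,c_l$ are hard-wired as public constants into the circuit $g_{(C,c_1,\dots,c_l)}(sk)=C(\widetilde{\Dec}(sk,c_1),\dots,\widetilde{\Dec}(sk,c_l))$, and $\Eval'$ homomorphically evaluates \emph{this} circuit on the encrypted secret-key bits $\widetilde{sk[1]},\dots,\widetilde{sk[e]}$, which are genuinely fresh encryptions produced at key generation. Then the single invocation of $\Eval$ is always on fresh inputs, its output decrypts to $C(\Dec(sk,c_1),\dots,\Dec(sk,c_l))$ for \emph{arbitrary} $c_i$, and the associated bridge $\bridge_C:\cH^{(l)}\to\cH$ is a Gentry-type bridge (with identical keys on both ends), hence complete by the argument of Theorem~\ref{thm:Gentry_complete}; completeness of every $\bridge_C$ is exactly the FcHE property. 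Your security analysis, by contrast, is essentially the right one: the only new public material is the self-encryption of the secret key, so IND-CPA security of $\cH'$ reduces to the circular security of $\cH$, matching the paper.
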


The terminology \textit{fully composable} is clarified in Definition \ref{fche} below.  A sketch of the proof of this theorem can be found in  the talk \cite{Mic22} \footnote{This is a talk given by D. Micciancio at the FHE.org conference 2022, that took place in Trondheim, NO - an affiliated event to the Eurocrypt 2022 conference. The proof starts at time 25:39.}. The idea of the proof is to use a procedure similar to the one in Gentry-type bridge construction. We saw in Theorem \ref{thm:Gentry_complete} that these give rise to complete bridges. The difference between the Gentry-type bridge construction and Micciancio's resides in the $\KeyGen$ algorithms. In the key generation process of a Gentry type bridge, it is required that the keys of the second scheme are generated independently from the keys of the first one. On the other hand, in Micciancio's construction the two schemes have identical secret and public keys. This difference impacts only the security of the bridge and does not affect its completeness property. This can give rise to security issues for which one is forced to add an extra assumption, namely the circular security assumption.

We start by showing how one can associate bridges to circuits and homomorphic encryption schemes.

Let $\mathcal{E} = (\KeyGen, \Enc, \Dec)$ be an encryption scheme and let $C:\mathcal{M}^r \to \mathcal{M}$ be a boolean circuit defined over the plaintext $\mathcal{M}$ of $\mathcal{E}.$ 

Assume that the scheme $\mathcal{E}$ is $C-$homomorphic. Recall that a scheme is homomorphic with respect to the circuit $C$, or is $C-$homomorphic, if there exists an algorithm $\Eval(evk, C, c_1,\ldots c_r)$ which takes as input an evaluation key $evk$, the circuit $C$, and an element of $\mathcal{C}^{r}$ such that the outputted ciphertext satisfies the following correctness condition:
$$
\Dec(sk,\Eval(evk, C, \Enc(m_1),\ldots, \Enc(m_r))) = C(m_1,\ldots,m_r),
$$ 

\noindent for all $(m_1,\ldots, m_r)\in \mathcal{M}^r.$

In this context, we associated to the pair $(\cE, C)$, the bridge $\bridge_{C} :\mathcal{E}^{(r)}\to \mathcal{E}$ where the function $\iota$ is given by:
$$
\iota(m_1,\ldots,m_r):=C(m_1,\ldots,m_r),
$$
the bridge key $bk$ consists of  the evaluation key $evk$ (possibly empty), and the bridge algorithm is
    
$$
    f_C(bk,c_1, \ldots, c_r):=\Eval(evk, C, c_1, \ldots, c_r).
$$

One can easily see that the correctness property of $\Eval$ guarantees that the above construction is indeed a bridge. Moreover, this bridge is secure as long as the $C$-homomorphic scheme $\mathcal{E}$ is secure. 

\begin{remark} If $\mathcal{E}$ is an FHE scheme, the above procedure gives rise to a family of bridges $\bridge_C$, one for each boolean circuit $C$. By concatenating such bridges $\bridge_{C_1}, \ldots, \bridge_{C_s} : \mathcal E^{(r)} \to \mathcal E$, one obtains a bridge $\bridge_{C_1,\ldots,C_s}:\mathcal{E}^{(r)}\to \mathcal{E}^{(s)}$.
\end{remark}

In his talk \cite{Mic22}, Micciancio points out one foundational problem in the definition of a fully homomorphic encryption scheme. Namely, the definition of an FHE scheme does not guarantee correct decryption if one sequentially evaluates two (multivalued) circuits on encrypted data. This issue is an instance of the more general problem of composing bridges.  

More precisely, suppose bridges $\bridge_{C_1, \dots, C_s} : \mathcal{E}^{(r)} \to \mathcal{E}^{(s)}$ and $\bridge_{D_1, \ldots, D_t} : \mathcal{E}^{(s)} \to \mathcal{E}^{(t)}$ are constructed as above, for some FHE scheme $\mathcal E$. The issue raised by Micciancio is equivalent to the composability of the pair of bridges $(\bridge_{C_1, \dots, C_s}, \bridge_{D_1, \ldots, D_t})$ (see Definition \ref{def:bridge_comp}) for all such circuits.

We recall the following definition due to Micciancio.

\begin{definition} \label{fche}
    An encryption scheme $\mathcal{E}$ is called \emph{fully composable encryption scheme} \footnote{This is the name coined by Micciancio, based on the consequence of this property described bellow. We chose to use the name complete for its generalisation to bridges.} (FcHE) if for any circuit $C:\mathcal{M}^r\to\mathcal{M}$, the following relation
    
    $$\Dec(sk,\Eval(evk, C, c_1,\ldots, c_r))=C(\Dec(sk,c_1),\ldots, \Dec(sk,c_r))
    $$
    holds for all $c_1,\ldots, c_r\in \mathcal{C}$.
\end{definition}

\begin{remark}\label{rem:fc_from_complete}
It is an immediate consequence of the definition that a scheme is fully composable if and only if every bridge $\bridge_C$ from the familly defined above is a complete bridge (see Definition \ref{complete:bridge}).
\end{remark}

We point out the following immediate consequence, which is consistent with the terminology chosen by Micciancio.

\begin{proposition}
    Let $\mathcal{E}$ be an FcHE scheme and let $C_1,C_2$ be two (multivalued) circuits, $C_1:\mathcal{M}^r \to \mathcal{M}^s$ and $C_2:\mathcal{M}^s \to \mathcal{M}^t$. Then
$$\Dec(sk,\Eval(evk, C_2, \Eval(evk, C_1, c_1, \ldots, c_r)) = C_2(C_1(m_1, \ldots, m_r))$$
for all $c_1, \ldots, c_r \in \mathcal{C}$, where $m_i = \Dec(sk,c_i)$ for all $i = \overline{1,r}$. 
\end{proposition}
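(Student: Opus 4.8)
The plan is to reduce the statement about composing two multivalued circuits to the single-circuit FcHE property, by exhibiting $C_2 \circ C_1$ as a circuit in its own right and then applying Definition \ref{fche} twice. First I would observe that if $C_1 : \mathcal{M}^r \to \mathcal{M}^s$ is a multivalued circuit, then $C_1 = (C_1^{(1)}, \ldots, C_1^{(s)})$ with each $C_1^{(j)} : \mathcal{M}^r \to \mathcal{M}$ an ordinary boolean circuit, so that $\Eval(evk, C_1, c_1, \ldots, c_r)$ is shorthand for the tuple $\bigl(\Eval(evk, C_1^{(1)}, c_1, \ldots, c_r), \ldots, \Eval(evk, C_1^{(s)}, c_1, \ldots, c_r)\bigr) \in \mathcal{C}^s$. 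Applying the FcHE property (Definition \ref{fche}) to each component circuit $C_1^{(j)}$ and the ciphertexts $c_1, \ldots, c_r$, I get $\Dec(sk, \Eval(evk, C_1^{(j)}, c_1, \ldots, c_r)) = C_1^{(j)}(m_1, \ldots, m_r)$ for all $j$, hence the intermediate tuple of ciphertexts $d_j := \Eval(evk, C_1^{(j)}, c_1, \ldots, c_r)$ decrypts coordinatewise to $C_1(m_1, \ldots, m_r) \in \mathcal{M}^s$.

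Next I would apply the FcHE property again, this time to the circuit $C_2$ (or its component circuits $C_2^{(k)} : \mathcal{M}^s \to \mathcal{M}$) and the ciphertexts $d_1, \ldots, d_s \in \mathcal{C}^s$ produced in the previous step. This is exactly where the strength of the FcHE hypothesis over the plain FHE hypothesis matters: the $d_j$ are \emph{not} fresh encryptions, but Definition \ref{fche} requires the correctness relation to hold for \emph{all} $c_1, \ldots, c_s \in \mathcal{C}$, so it applies to the $d_j$. Thus $\Dec(sk, \Eval(evk, C_2, d_1, \ldots, d_s)) = C_2(\Dec(sk, d_1), \ldots, \Dec(sk, d_s)) = C_2(C_1(m_1, \ldots, m_r))$, which is the claimed identity after unwinding the abbreviation $\Eval(evk, C_1, c_1, \ldots, c_r) = (d_1, \ldots, d_s)$.

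The only genuine subtlety — and it is more a bookkeeping point than a real obstacle — is making precise the convention that $\Eval$ on a multivalued circuit means the tuple of evaluations of its components, and correspondingly that $\Dec$ on a tuple means coordinatewise decryption (this is just the fiber-power formalism of Definition \ref{def: fiberpower}, or equivalently the observation in Remark \ref{rem:fc_from_complete} that FcHE says every $\bridge_C$ is complete). Once that convention is fixed, the proof is a two-line invocation of Definition \ref{fche}. I would also remark that an alternative, perhaps cleaner, phrasing is to note that $C_1$ and $C_2$ induce complete bridges $\bridge_{C_1} : \mathcal{E}^{(r)} \to \mathcal{E}^{(s)}$ and $\bridge_{C_2} : \mathcal{E}^{(s)} \to \mathcal{E}^{(t)}$ via Remark \ref{rem:fc_from_complete}, and then Proposition \ref{PropComp} (with $g = \bridge_{C_2}$ complete) immediately yields that $\bridge_{C_2} \circ \bridge_{C_1}$ is a bridge with $\iota_{g \circ f} = C_2 \circ C_1$, which is precisely the assertion; I would present the direct computation as the main argument and mention the bridge-theoretic reformulation as a remark.
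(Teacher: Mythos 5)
Your proof is correct and is precisely the argument the paper has in mind: the paper states this proposition as an ``immediate consequence'' of Definition~\ref{fche} without writing out a proof, and your two applications of the FcHE property (componentwise on $C_1$, then on $C_2$ applied to the non-fresh intermediate ciphertexts) together with the coordinatewise convention for $\Eval$ and $\Dec$ on tuples supply exactly the omitted details. The bridge-theoretic reformulation you mention as a remark is likewise consistent with the paper's Remark~\ref{rem:fc_from_complete} and Proposition~\ref{PropComp}.
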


In particular, in FcHE schemes, one can homomorphically  compose \footnote{By homomorphic composition we understand sequential evaluation of two (multivalued) circuits on encrypted data.} any circuits. However, the converse is not true:

\begin{example}
    Consider any FHE scheme $\mathcal{E}$ which is constructed using a bootstrapping procedure and such that for every ciphertext $c$, we have $\mathrm{Bootstrap}(c) \in \Enc(pk, \Dec(sk,c))$. In such a scheme, one can homomorphically perform bootstrapping after every operation. Then modify the evaluation algorithm of $\mathcal {E}$ such that for every circuit $C$,
    $$\Eval'(evk, C,  c_1, \ldots, c_r) := \mathrm{Bootstrap}(\Eval(evk, C, c_1, \ldots, c_r)).$$
In such a scheme, one can homomorphically compose every two circuits, as the output of every evaluation algorithm will be a fresh encryption. However, such a scheme is not necessarily a FcHE scheme according to the Definition \ref{fche}.
\end{example}

We obtain Micciancio's result as a corollary to Theorem \ref{thm:Gentry_complete}, as we now explain.
Indeed, recall from the construction of Gentry-type bridges that one can realise the set of secret keys of $\mathcal H$ and its ciphertext space as subsets of $\mathcal M^e$ and respectively $\mathcal{M}^n$, where $\mathcal{M}$ is the plaintext of $\mathcal{E}$. In the same section, we constructed the map $\widetilde{ \Dec}: \mathcal{M}^{e} \times \mathcal{M}^n \to \mathcal M$. For any ciphertext $c$ of $\cH$, we can restrict the second argument of $\widetilde{\Dec}$ to obtain $\widetilde{\Dec}(\cdot, c): \mathcal M^e \to \mathcal M$. 

The encryption scheme $\mathcal H'$ proposed by Micciancio is constructed as follows. The encryption and decryption procedures are the ones from $\mathcal H$. The key generation algorithm uses $\mathrm{KeyGen}$ to obtain a triple  $(sk,pk, evk)$. The algorithm now encrypts each component of $sk=(sk[1],...,sk[e])$, viewed as an element of $\mathcal M^e$, to obtain $(\widetilde{sk[1]},..., \widetilde{sk[e]})\in \mathcal{C}^e$. Finally, it outputs the triple $(sk, pk,evk')$, where $evk'=(evk, \widetilde{sk[1]},..., \widetilde{sk[e]})$. 

For any $C : \mathcal{M}^l \to \mathcal M$, the evaluation algorithm $\mathrm{Eval}'$ works as follows. Let $c_1, c_2, \ldots, c_l \in \mathcal C \subseteq \mathcal M^n$  be any $l$ ciphertexts. Let $g_{(C, c_1, ..., c_l)}: \mathcal M^e \to \mathcal M$ be  defined as $$g_{(C, c_1, ..., c_l)}(sk) = C(  \widetilde{\Dec}(sk,c_1),  \widetilde{\Dec}(sk,c_2), \ldots,  \widetilde{\Dec}(sk,c_l) ).$$ 
The evaluation algorithm of $\mathcal H'$ is given by:

\begin{equation*}
    \Eval'(evk', C, c_1, ..., c_l) := \Eval(evk, g_{(C, c_1, ..., c_l)}, \widetilde{sk[1]},..., \widetilde{sk[e]}).
\end{equation*}

We now use the language of bridges to show that $\cH'$ is a secure FcHE, under the circular security assumption for $\cH$.

The circuit $C$ gives rise as above to a bridge
$$\bridge_{C} : \mathcal H'\,^{(l)} \to \mathcal{H}'.$$
Since $\cH$ and $\cH'$ are identical once we forget the evaluation algorithms, we can view this bridge as a bridge ${\bf Br}_C:\cH^{(l)}\to \cH$. However, this bridge is \emph{not} the bridge associated to the circuit $C$ as above. Actually, it is easy to see that this bridge is the Gentry-type bridge constructed in Section \ref{sec:Gentry} where $\mathcal{E}=\cH^{(l)}$, and $\iota:\cM^l\to\cM$ is defined by $C$, with the difference that the $\KeyGen$ outputs the same secret key for both $\cE$ and $\cH.$ However, the proof of Theorem \ref{thm:Gentry_complete} transports identically, thus the bridge $\bridge_C$ is complete. Using Remark \ref{rem:fc_from_complete}, the scheme $\cH'$ is an FcHE scheme.

As we mentioned in Remark \ref{secure:Gentry}, the authors proved in \cite[Theorem 3]{BLPT23} that any Gentry-type bridge between two secure encryption schemes is also secure. However, this result does not directly apply to $\bridge_C$ because this bridge has a slightly different KeyGeneration algorithm. Namely, here the secret keys of the encryption schemes involved in the bridge are identical, whereas in Gentry-type bridges they are generated independently. In the present situation, the security analysis of the scheme $\cH'$ is much simpler, because it is equivalent to the security analysis of $\cH[evk']$. On the other hand, the security of the latter scheme reduces to the circular security assumption.

%
%
%
%

\end{document}